\documentclass{article}
\usepackage{orcidlink}

\usepackage{authblk}

\usepackage{amssymb}
\usepackage{amsmath}

\usepackage{array}
\usepackage{url}
\usepackage{todonotes}
\usepackage{siunitx}
\usepackage{multirow}
\usepackage{comment}
\usepackage{subcaption}
\usepackage{float}
\usepackage{booktabs}

\usepackage{arxiv}
\usepackage{manfnt}
\renewcommand{\shorttitle}{Surrogate Model for Satellite Observations in NWP}
\newcommand{\ems}{\varepsilon_\nu} 

\usepackage{xspace}
\newcommand{\HMLincM}{\mathcal{H}_{\text{ML-{inc}}}}
\newcommand{\HMLinc}{$\HMLincM$\xspace}

\newcommand{\HMLfullM}{\mathcal{H}_{\text{ML-{full}}}}
\newcommand{\HMLfull}{$\HMLfullM$\xspace}

\newcommand{\HRTTOVM}{\mathcal{H}_{\text{RTTOV}}}
\newcommand{\HRTTOV}{$\HRTTOVM$\xspace}

\usepackage{siunitx}

\providecommand{\qty}[2]{\SI{#1}{#2}}
\providecommand{\unit}[1]{\si{#1}}
\providecommand{\qtyrange}[4][]{\SIrange[#1]{#2}{#3}{#4}}

\usepackage{amsthm}

\newcommand{\kg}{kg}
\newcommand{\per}{$//$}
\newcommand{\giga}{G}
\newcommand{\byte}{B}

\newtheorem{proposition}{Proposition}[section]

\theoremstyle{definition}

\theoremstyle{remark}
\newtheorem{remark}{Remark}[section]

\title{Learning Data-driven Surrogate and Correction Models for Satellite Observations in Numerical Weather Prediction}
\author[1]{Gian Luca Buono \orcidlink{0009-0003-3020-2383}}
\author[2]{Stefanie Hollborn \orcidlink{0009-0003-0016-5550}}
\author[2,3]{Roland Potthast \orcidlink{0000-0001-6794-2500}}
\author[1]{Jörg Schäfer \orcidlink{0000-0003-4797-0306}}
\author[1]{Martin Simon \orcidlink{0000-0002-1131-6364}}
\affil[1]{Department of Computer Science and Engineering, Frankfurt University of Applied Sciences, Frankfurt, Germany\cr \texttt{\{luca.buono, jschaefer, martin.simon\}@fra-uas.de}} 

\affil[2]{Meteorological Analysis and Modeling Unit, Deutscher Wetterdienst, Offenbach, Germany\cr 
\texttt{\{Stefanie.Hollborn, Roland.Potthast\}@dwd.de}} 
\affil[3]{University of Reading, United Kingdom}{}

\date{March 2026}

\begin{document}
\maketitle

\begin{abstract}
Satellite observations play a critical role in numerical weather prediction where they are assimilated through an observation operator that maps model states to radiances. In the traditional Ensemble Kalman Filter, these observations are used to update the state by weighting their associated errors against model uncertainties to produce an optimal estimate. This process requires radiative transfer simulations for passive, downward-viewing satellite radiometers operating in the visible, infrared, and microwave spectra. Typically, such simulations rely on numerically integrating physical laws via models like RTTOV. In this paper, we introduce two machine learning surrogate observation operators inspired by modern computer-vision architectures: First, \HMLfull, a fully data-driven emulator of radiative transfer, and second, \HMLinc, a hybrid incremental correction model that learns only the residual relative to RTTOV, thereby retaining established physics while enabling data-driven refinement in complex conditions such as cloud-affected situations. The residual formulation improves radiance accuracy (lower Root Mean Squared Error (RMSE) than the fully data-driven emulator and RTTOV) and adds only moderate computational costs to the assimilation step. Both models combine 3D convolutions for vertical profile encoding with a 2D U-Net operating on latitude–longitude grids, allowing joint learning of vertical structure, spatial correlations, and inter-channel dependencies.
We further provide a theoretical justification for deploying the hybrid surrogate as an observation operator in data assimilation. 
\end{abstract}

\section{Introduction}
Satellite observations are essential for numerical weather prediction (NWP), but their assimilation requires radiative transfer models (RTMs) to simulate observed radiances from atmospheric states, which can be computationally expensive \cite{bauer2015quiet, rodgers2000inverse, saunders2018update}. Traditional RTMs like RTTOV (Radiative Transfer for TIROS\footnote{Television InfraRed Observation Satellite} Operational Vertical Sounder) are physically grounded, yet they can struggle in cloud-affected (all-sky) conditions and impose a substantial computational burden on operational data assimilation \cite{bauer2010allsky, geer2018all}. Using machine learning approaches has been highly popular in many areas, including radiative transfer simulations in atmospheric sciences, e.g. \cite{krasnopolsky2010nnradiation,dueben2018challenges,chantry2021machinelearning}. In this work, we develop machine learning surrogate observation operators that either emulate satellite radiances directly or learn residual corrections to RTTOV, with the goal of improving both computational efficiency and accuracy for all-sky assimilation.

\subsection{General Context}

NWP is based on the numerical solution of the governing equations of atmospheric motion, derived from the Navier--Stokes equations under appropriate geophysical approximations. These equations describe the conservation of mass, momentum, energy, and water species and are complemented by equations of state and radiative transfer. For numerical computation, the resulting non-linear partial differential equations are discretized in space and time using finite-difference, finite-volume, or spectral methods on structured or unstructured grids. Continuous advances in numerical methods, physical parameterizations, and high-performance computing have enabled increasingly realistic atmospheric simulations across a wide range of spatial and temporal scales \cite{Kalnay2003,Durran2010,Williamson2007}.

A persistent challenge in classical NWP arises from the representation of unresolved subgrid-scale processes, including turbulence, cloud microphysics, radiation, and land-surface interactions. These processes must be parameterized and introduce systematic model error that limits forecast skill. In addition, the chaotic nature of the atmosphere leads to rapid error growth from uncertainties in initial conditions. Ensemble prediction systems were therefore introduced to quantify forecast uncertainty by sampling these uncertainties, providing probabilistic forecasts that are now an integral component of operational NWP \cite{LeutbecherPalmer2008,Palmer2019}.

Data assimilation provides the mathematical framework to optimally combine short-range forecasts with heterogeneous observations in order to estimate the atmospheric state on the model grid. While variational approaches such as 3D-Var and 4D-Var have long been used operationally, ensemble-based methods have gained increasing importance due to their ability to represent flow-dependent background-error covariances. Hybrid ensemble-variational methods (EnVAR) combine ensemble-derived covariances with variational optimization, while ensemble Kalman filter approaches, such as the Local Ensemble Transform Kalman Filter (LETKF), achieve computational efficiency through localization and ensemble-space updates \cite{Hunt2007,Bannister2017}. At convective scales, ensemble-based data assimilation has been successfully demonstrated in operational settings, for example in the kilometre-scale COSMO ensemble data assimilation system \cite{Schraff2016}. Beyond Gaussian assumptions, particle filter methods provide a fully Bayesian alternative for highly non-linear and non-Gaussian systems, and recent developments in localization strategies have enabled their application to high-dimensional atmospheric models \cite{Potthast2019}.

Alongside conventional in-situ observations from radiosondes, surface stations, and aircraft, satellite observations constitute one of the most important components of the global meteorological observing system. Satellite data provide dense spatial coverage and are routinely assimilated in numerical weather prediction across the microwave (MW), infrared (IR), and visible (VIS) spectral ranges \cite{geer2018all,eyre2022_2}. Microwave observations form the backbone of modern data assimilation systems, as they provide robust information on atmospheric temperature and humidity profiles and can be used under most cloud and precipitation conditions. Infrared observations from both polar-orbiting and geostationary satellites offer high spatial and temporal resolution and contain valuable information on temperature, humidity, and cloud properties. Historically, IR radiances were assimilated primarily under clear-sky conditions due to the strong and highly non-linear impact of clouds on radiative transfer. More recently, advances in radiative transfer modelling, quality control, and observation error specification have enabled the assimilation of cloud-affected IR radiances, opening new opportunities for improving forecasts of clouds, precipitation, and convective processes.

Visible satellite observations provide very high-resolution information on cloud structure and temporal evolution, which is particularly valuable for nowcasting and short-range forecasting of convection. The operational assimilation of VIS reflectances has progressed rapidly in recent years and has been implemented at the German Meteorological Service (DWD) since March 2023 \cite{scheck2020,jahrbuch2023}. This development highlights the increasing relevance of VIS observations for operational data assimilation.

All-sky data assimilation refers to the use of satellite observations
irrespective of cloud cover, including cloudy and precipitating conditions
\cite{geer2018all,Schomburg2026}. Traditional cloud-screening approaches, while simplifying the assimilation process, discard a large fraction of observations in dynamically active regions. Although MW all-sky assimilation is now operationally well established, all-sky assimilation of IR radiances remains an active area of research with substantial potential for further forecast improvements. Overall, all-sky approaches have been shown to reduce forecast errors, particularly for humidity, cloud, and precipitation-related variables, by increasing both the quantity and the effective information content of assimilated observations.

The use of observations in a numerical weather prediction (NWP) system by data assimilation comprises in general two key aspects: observation forward modeling and integration within the data assimilation system, with a particular emphasis on error modeling (\cite{Kalnay2003,NakamuraPotthast2015}). In data assimilation, observation operators $\mathcal{H}$ (also referred to as forward models) are required to simulate virtual observations that are consistent with the variables of a NWP model. During the assimilation process, the model state $x$ is corrected by comparing actual measured observations $y$ with their simulated counterparts. This adjustment ensures an optimal alignment of the model with real atmospheric conditions.

RTMs simulate the top-of-atmosphere radiances observed by satellite instruments, based on atmospheric profiles of temperature, moisture, trace gases, aerosols, clouds, and surface properties. Just as clouds present significant challenges for atmospheric modeling, the simulation of observations in cloud-affected conditions poses comparable complexities for observation operators. 
Recent advancements in RTMs, particularly with models like RTTOV \cite{saunders2018update, rttov14svr} and the Community Radiative Transfer Model CRTM \cite{CRTM2023}, have significantly enhanced their accuracy and computational efficiency, enabling the assimilation of data from a wide range of satellite instruments across the electromagnetic spectrum. For an overview we refer to \cite{eyre2020_1, eyre2022_2, weng2020}. These RTMs incorporate sophisticated treatments of gaseous absorption, cloud scattering, and surface emissivity--including over complex terrains like land, ice, and ocean--which has facilitated the transition toward all-sky assimilation, where cloud-affected radiances are routinely integrated into NWP systems. 

In data assimilation, observation errors are represented by the error covariance matrix $R$, which not only characterizes the variance of errors for each individual observation but also captures the correlations between errors across different observations. Historically, observation errors were assumed to be uncorrelated, resulting in a diagonal R-matrix, e.g. \cite{stewart2013data}. However, this assumption is often insufficient, particularly for high-resolution or spectrally correlated data--such as those from hyperspectral infrared or microwave sounders--where errors can exhibit significant interdependencies.
Modern approaches recognize the importance of accounting for vertically correlated errors, where adjacent channels exhibit strong interdependencies. Techniques like the Desroziers method \cite{desroziers2005diagnosis} have been instrumental in diagnosing and refining these error covariances, though practical implementation often requires conditioning to ensure numerical stability.
The transition to all-sky assimilation has necessitated scene-dependent error models, which dynamically adjust error statistics based on factors such as cloud cover, surface type, and observation geometry \cite{geer2019correlated}. Clouds introduce additional uncertainties due to forward model errors (e.g., inaccuracies in cloud representation) and increased variability in radiance measurements. To address this, error models incorporate cloud-sensitive parameters, ensuring that the assimilation system appropriately weights observations affected by clouds. While vertical error correlations are increasingly accounted for, horizontal correlations remain challenging and are often managed through observation thinning to mitigate redundancy.

\subsection{Our Contribution}
In this work, we develop a three-dimensional data-driven methodology for {\em simulating} or {\em correcting} satellite radiance measurements through surrogate observation operators $\mathcal{H}$. This means that these operators can be used both for direct observation simulation $y=\mathcal{H}(x)$ as well to calculate a correction $\delta y = \delta \mathcal{H}(x)$ to the physical simulation. Clearly, $\delta \mathcal{H}(x)$ can also be used to estimate or improve the estimation of observation errors $R = R(x)$ in NWP, which arise from various sources, including background uncertainty, instrument noise and representation errors.

The complexity of the mappings $\mathcal{H}(x)$ and $R(x)$ underscores the need for situation-dependent modeling of observation errors rather than assuming constant error characteristics. The primary goal of our work is to capture this variability using machine learning techniques to learn a fully data-driven observation operator surrogate, or the residuals with respect to standard radiative transfer models such as the RTTOV operator provided by EUMETSAT (see Section~\ref{sec:rttov}). This results in dynamic surrogate models that adapt to changing atmospheric conditions.

Machine learning techniques have previously been applied to enhance radiative transfer computations. For example, in \cite{Scheck}, a neural network was used to improve a look-up table approach, while in \cite{BaurEtAl}, a neural network was employed to simulate reflectances of the \qty{1.6}{\mu m} near-infrared channels. The authors of \cite{ZhouEtAl} proposed 3D Short-Wave radiative effects corrections to 1D models, and \cite{StegmannEtAl} introduced a deep learning method to accelerate radiative transfer simulation by improving regression coefficients within an existing model.

We propose and investigate two data-driven surrogate operators  \HMLfull and \HMLinc. While the first surrogate \HMLfull is a fully data-driven emulator attempting to map atmospheric states directly to radiances, being fully independent of RTTOV and could be henceforth considered a genuine \emph{surrogate emulator} or \emph{surrogate model}, the second surrogate operator \HMLinc adopts an incremental, hybrid formulation. That is, the (deep) network learns only the residuals with respect to RTTOV, thereby embedding established radiative-transfer physics into the surrogate while enabling data-driven refinement in complex, cloud-affected conditions. Therefore, we call it a \emph{correction model}. 

We develop a three-dimensional learning approach that incorporates both the vertical structure of atmospheric profiles and the horizontal spatial structure of the observations to approximate the observation operator or its correction. We evaluate the quality of both models demonstrating that 
the residual-learning approach not only achieves lower root-mean-square errors compared to the fully data-driven approach but also as compared to the physics only traditional numerical model. It is possible to achieve a good data-driven approximation of the observation operator, but the residual-learning approach has better quality, and it leverages decades of radiative-transfer expertise encoded in RTTOV and focuses its learning capacity on the remaining, physically meaningful deviations. To the best of our knowledge, such a three-dimensional model based on U-Net architectures has not been explored in the literature before.

\section{Methodological Background}
\label{sec:methodology}
This section establishes the methodological background for data-driven observation operators. We review the role of observation operators and their corresponding errors in data assimilation, then summarize the RTTOV radiative transfer model as a physics-based observation operator, and finally, introduce our hybrid incremental correction method, which learns residuals between RTTOV-simulated radiances and satellite observations.

\subsection{Observation Error in Data Assimilation}
Consider the state of a dynamical system $x_k \in \mathbb{R}^{n}$ at discrete times $t_k$, $k \in \mathbb{N}$. The state is propagated using a (possibly nonlinear) dynamical model, 

\begin{equation}
x_k = \mathcal{M}(x_{k-1}) + \eta_k,
\end{equation}

where $\mathcal{M}: \mathbb{R}^n\to \mathbb{R}^{n}$ denotes the forecast model and $\eta_k$ is the model error, often assumed Gaussian with zero mean and covariance $Q_k$\footnote{This assumption is not necessary for the sequel, however.}. 

The system is observed through measurement vectors $y_k \in \mathbb{R}^{m}$ related to the (true) state via the observation model

\begin{equation}
y_k = \mathcal{H}(x_k) + \epsilon_k,
\end{equation}

where $\mathcal{H}$ is the (possibly nonlinear) observation operator and $\epsilon_k$ denotes the observation error. Formally, the (unobserved) error realization satisfies 
\begin{equation}
\epsilon_k := y_k - \mathcal{H}(x_k).
\end{equation}
In many practical applications, $\epsilon_k$ is modeled as Gaussian with zero mean and covariance $R_k$. We do not impose this assumption here, as in all-sky radiance assimilation representativeness and forward-model errors often lead to non-Gaussian and state-dependent observation errors. For readability, we omit the time index $k$ from this point onward and consider a generic assimilation step.

\subsection{Contributions to Observation Error}
Observation error is commonly decomposed into instrument error, representativeness error, and observation-operator error \cite{JanjicEtAl}. Representativeness error arises because observations sample atmospheric processes at spatial and temporal scales that are unresolved or only partially resolved by the numerical model, leading to systematic mismatches between observed quantities and their model-space counterparts; this effect is strongly process dependent and can vary dynamically in space and time \cite{Vobig2021,Potthast2022,VobigPotthast2025}.

In what follows, $\;\widehat{\cdot}\;$ denotes the inaccessible \emph{true}, \emph{continuous}
quantities, in contrast to discretized ones accessible on a computer. Given a discrete state
$x$, let $x^{c}:=\mathcal I x$ denote its reconstructed (interpolated) continuous field, where
$\mathcal I$ is a prescribed reconstruction operator. We write $\widehat{\mathcal H}(x^{c})$
for the evaluation of the continuous observation operator on this reconstruction; this does
not imply that the true state $\widehat{x}$ is a function of $x$.

Assuming the observation model $y = \widehat{\mathcal H}(\widehat{x}) + \epsilon_{\text{m}}$,
define the obs-minus-model residual with respect to the discretized operator $\mathcal H$ as
\[
\epsilon := y - \mathcal H(x).
\]
Adding and subtracting $\widehat{\mathcal H}(x^{c})$ yields the decomposition
\begin{align}\label{eq:errorDecomp}
\epsilon
&= \epsilon_{\text{m}}
 + \underbrace{\widehat{\mathcal H}(\widehat{x}) - \widehat{\mathcal H}(x^{c})}_{\text{representation error }\epsilon_{\text{r}}}
 + \underbrace{\widehat{\mathcal H}(x^{c}) - \mathcal H(x)}_{\text{forward model / discretization error }\epsilon_{\text{f}}}.
\end{align}

\begin{remark} If needed, one could also include an additional \emph{pre-processing} error $\epsilon_{\text{p}}$ in (\ref{eq:errorDecomp})
but we omit it for simplicity.
\end{remark}

\subsection{The RTTOV Observation Operator}\label{sec:rttov}

Among the most widely used radiative transfer models is RTTOV (Radiative Transfer for TOVS), a fast and flexible model designed for simulating radiances from passive visible, infrared, and microwave satellite instruments. RTTOV is operationally employed at major NWP centers, including ECMWF, the Met Office, Météo-France, and DWD, and is a key deliverable of the EUMETSAT NWP SAF (Satellite Application Facility for Numerical Weather Prediction), see \cite{saunders2018update, rttov14svr}.
RTTOV simulates radiances across a broad spectral range, covering visible, infrared, and microwave wavelengths. The model accepts a comprehensive set of atmospheric input parameters, including temperature, water vapor, trace gases, aerosols, clouds, and hydrometeors. Additionally, RTTOV incorporates surface emissivity and reflection models for diverse surface types, such as land, ocean, sea ice, and snow. The model also includes scattering and extinction parameters for clouds and precipitation, thereby enabling all-sky radiance simulation.

In the sequel we denote the RTTOV operator by  $\mathcal{H}_{RTTOV}$. Note, that $\mathcal{H}_{RTTOV}$ constitutes a genuinely non-linear operator as it is effectively an integral operator integrating Plancks law under the hood.

The interactions of particles with radiation can be attributed to 
\begin{enumerate}
    \item Absorption: radiation attenuation by energetic modification (heat or chemical reaction), 
    \item Emission: isotropic increase in radiation by molecular excitation due to absorption. Note, that according to Kirchhoff’s law in thermodynamical equilibrium we have: $emission = absorption$, and 
    \item Scattering: radiation attenuation by deviation of radiation from original direction; also increase in radiation by deviation of radiation into direction under consideration.
\end{enumerate}

The following section is based on the exposition in \cite{weng2017}. Consider a pressure coordinate system extending from the surface at pressure $p = p_S$ to the top of the atmosphere at $p = 0$. Let $I_\nu(0)$ denote the radiation intensity at the top of the atmosphere ($p = 0$), and $I_\nu(p_S)$ the radiation emitted from the surface. 
Transmittance $\tau_\nu(p_1, p_2)$ between two pressure levels $p_1$ and $p_2$ is computed by applying the Beer–Lambert law, which integrates the absorption of radiation along a path through the atmosphere using gas-specific absorption coefficients and layer densities. The absorption coefficients vary significantly with temperature and atmospheric gas concentrations.

Assuming local thermodynamic equilibrium, the radiation at frequency $\nu$ reaching the satellite is given by the radiative transfer equation:

\[
I_\nu(0) = I_\nu(p_S) \, \tau_\nu(p_S, 0) + \int_{p_S}^{0} B_\nu \big( T(p) \big) \frac{\partial \tau_\nu(p, 0)}{\partial p} \,\mathrm{d}p,
\]
where $T(p)$ is the temperature at pressure $p$, and $B_\nu(T)$ is the Planck function at frequency $\nu$ and temperature $T$.

The radiation emitted from the surface depends on the surface temperature $T_S$, the surface emissivity $\ems$, and the reflected atmospheric and solar radiation. It can be expressed as:

\[
I_\nu(p_S) = \ems B_\nu(T_S) + (1 - \ems) \int_{0}^{p_S} B_\nu \big( T(p) \big) \frac{\partial \tau_\nu(p_S, p)}{\partial p} \,\mathrm{d}p.
\]

In the original RTTOV formulation the atmospheric layer was assumed to be optically thin corresponding to so-called clear-sky situation, i.e.\ without clouds. For optically thick layers (e.g., with cloud) only the upper regions of the atmosphere provide a significant contribution to the radiance, which has to be taken into account by suitable modifications of the calculations outlined above, for further details we refer to \cite{weng2017} and \cite{saunders2018update}.

The radiative transfer equations must be adapted to the observation geometry, since the path length and direction of radiation through the atmosphere depend on both the solar zenith angle (for incoming solar radiation) and the satellite viewing angle. When the satellite observes the atmosphere at an oblique angle rather than directly, the effective path through each atmospheric layer is longer, and in some cases the satellite may not observe radiation from the true top of atmosphere (e.g., due to limb geometry or partial atmospheric blocking), which must be accounted for in the transmittance and radiance calculations.

In RTTOV, atmospheric transmittance is computed using precomputed optical depth coefficients derived from high-resolution line-by-line radiative transfer simulations. For each channel, RTTOV interpolates these coefficients based on the input atmospheric profile (temperature, gas concentrations, pressure levels), then integrates layer-by-layer optical depths along the specified viewing path to obtain transmittance values. This coefficient-based approach allows RTTOV to rapidly compute accurate transmittances suitable for operational data assimilation.
The surface emissivity can either be supplied directly by the user or calculated internally using built-in models and emissivity atlases. For infrared channels and land or sea-ice surfaces, RTTOV uses dedicated IR emissivity models or spectral atlases; if no detailed information is provided, default emissivity values are applied.

In operational numerical weather prediction the computational efficiency of satellite data assimilation is critical, as real-time data assimilation demands rapid processing of vast observational datasets. While RTTOV is optimized for speed, further reductions in computational load are essential to maintain operational feasibility. A common approach to achieving this involves simplifying observation error modelling, where certain error correlations--such as inter-channel, horizontal, and temporal correlations--are often neglected. Although this reduces computational overhead, it can lead to suboptimal use of observational information, potentially degrading forecast accuracy.
To address this limitation, data-driven approaches present a promising alternative. Training can be performed offline, allowing the model to learn complex error structures in the data. By capturing error dependencies that are typically neglected in traditional error models, data-driven approaches could enhance the information content extracted from observations, ultimately improving the accuracy and robustness of data assimilation systems. In order to learn such error dependencies from data, one has to choose an ML model architecture capable of learning the observation operator. Ideally, the ML model architecture is capable of learning the spatial, temporal and channel correlations. To this end, we need to carefully understand the spatial structure both from a theoretical and modeling perspective as it influences the estimation of errors as well as from an operational perspective as the data structures of our ML model depend on it. 

It is important to realize that the dichotomy between clear-sky and all-sky situations essentially introduces a binary source of uncertainty as two different numerical models of the observation error are applied depending on the (unknown) context clear- or all-sky. As the difference between clear-sky and all-sky estimates is significant, the numerical estimation of the observation error requires a sufficiently sophisticated identification of the underlying scene, i.e., correctly determining whether the situation corresponds to clear-sky or all-sky conditions. Our data-driven approach on the other hand does not distinguish between clear- and all-sky regimes so there is no need for providing a  sufficiently sophisticated \lq\lq guess\rq\rq\ of the scene. 

\subsection{Incremental Hybrid Correction Approach}
\label{subsec:datadrivenObsOpSurrogate}
RTTOV is the established physics-based forward model for radiative transfer in operational data assimilation. In this work, we develop an incremental hybrid surrogate, denoted by \HMLinc, in which the data-driven component does not replace RTTOV but learns an additive correction to its output. Concretely, we preserve RTTOV’s input–output structure and model 
\begin{equation}\label{eq:hmlinc_def}
\HMLincM(x) := \HRTTOVM(x) + \delta_{\mathrm{ML}}(x),
\end{equation}
where $\delta_{\mathrm{ML}}(x)$ is a neural network trained on residuals between RTTOV-simulated radiances and satellite observations.

This incremental hybrid construction enables diagnostic insight into the relative magnitude of the error contributions in \eqref{eq:errorDecomp} under cloudy and clear-sky conditions: if a deterministic component (e.g., forward-model error) dominates in certain regimes, it is natural to learn an associated, state-dependent bias correction from data.

\noindent 
Let
\[
\delta(x) := \widehat{\mathcal{H}}(x^c) - \HRTTOVM(x)
\]
denote the \emph{operator mismatch} of the true operator caused by discretization and using RTTOV.
If $\HMLincM(\cdot)$ is a good approximation of the (inaccessible) true observation operator, then $\delta_{\text{ML}}(\cdot)$ provides an approximation of $\delta(\cdot)$. This is the key idea of our approach: the surrogate makes an estimation of the  inaccessible operator mismatch \emph{computable}. More precisely, observations consist of pairs $(x,y)$ satisfying
\[
y = \widehat{\mathcal H}(x^c) + \epsilon_{\mathrm m},
\]
note, however, that $\widehat{\mathcal H}(x^c)$ is not computable and consequently $\delta(x)$ can not be identified pointwise.

What is identifiable under the assumption $\mathbb E[\epsilon_{\mathrm m}\mid x]=0$ is the conditional bias function
\[
b(x) := \mathbb E\bigl[y-\HRTTOVM(x)\mid x\bigr].
\]
Under the model $y=\HRTTOVM(x)+\delta(x)+\epsilon_{\mathrm m}$, this conditional bias satisfies $b(x)=\delta(x)$. In this sense, the learned correction $\delta_{\text{ML}}(\cdot)$ targets the systematic discrepancy between observations and the baseline operator $\HRTTOVM(x)$ conditional on the model state $x$. It should be emphasized that $\delta(x)$ in this formulation does not represent a pure discretization error in an absolute physical sense. Rather, it may aggregate multiple sources of systematic discrepancy, including forward-model bias, representativeness effects, and any remaining state-dependent preprocessing biases present in the observations. A clean physical separation of these contributions would require additional structural assumptions beyond those considered here.

As a result, the innovation (observation residual) admits the exact decomposition
\begin{align*}
\epsilon
&:= y-\HRTTOVM(x) \\
&= \underbrace{y-\HMLincM(x)}_{\text{residual under surrogate}}
   + \underbrace{\HMLincM(x)-\HRTTOVM(x)}_{\delta_{\mathrm{ML}}(x)}.
\end{align*}
If $\HMLincM$ accurately captures the systematic component of the observation discrepancy, i.e.,\ the conditional bias
$b(x)=\mathbb E[y-\HRTTOVM(x)\mid x]$, then the residual term $y-\HMLincM(x)$ is dominated by the remaining non-systematic components of the observation error, including measurement noise and, in practice, representativeness effects. In this sense,
\begin{equation}
\epsilon \approx \epsilon_{\mathrm m} + \delta_{\mathrm{ML}}(x),
\label{eq:errorDecomp3}
\end{equation}
where the approximation is understood in terms of conditional moments rather than pointwise equality.

That is, $\delta_{\text{ML}}(x)$ acts as a state-dependent, deterministic bias term that is available during the EnKF cycle, whereas $y-\HMLincM(x)$ primarily captures random components.
Therefore, our incremental hybrid approach learns \emph{residual corrections} rather than the observations directly; this can be beneficial as is proved in Proposition \ref{prop:residual_better} below. 
\begin{remark}
Note that equivalently, once $\delta_{\text{ML}}$ is available, one may define a bias-corrected observation $y^{\text{corr}} := y - \delta_{\text{ML}}(x)$, for which the innovation satisfies
$y^{\text{corr}} - \HRTTOVM(x) = y - \HMLincM(x)$. Thus, using the surrogate operator is equivalent to assimilating a bias-corrected observation while retaining the baseline RTTOV operator.
\end{remark}

Let us specify assumptions under which an incremental surrogate \HMLinc can approximate the true observation operator. 
Unless stated otherwise, all expectations are taken with respect to the joint distribution of the random variables induced by the data-generating process; we write $\mathbb E[\cdot\mid x]$ for conditional expectation given $x$ and $\mathbb E_x[\cdot]$ when emphasizing integration over the marginal of $x$. We assume the measurement error has zero conditional mean, 
\begin{equation*}
\mathbb{E}[\epsilon_{\text{m}} \mid x]=0,
\end{equation*}
and training data $\mathcal{D}=\{(x^{(i)},y^{(i)})\}_{i=1}^N$ satisfy
\begin{equation*}
y^{(i)}=\widehat{\mathcal{H}}((x^{(i)})^c)+\epsilon_{\text{m}}^{(i)},\quad i=1,...,N.
\end{equation*}
We define residual targets
\begin{equation*}
r^{(i)} := y^{(i)}-\HRTTOVM(x^{(i)}) 
= \bigl(\widehat{\mathcal{H}}((x^{(i)})^c)-\HRTTOVM(x^{(i)})\bigr)+\epsilon_{\text{m}}^{(i)}.
\end{equation*}
We then learn a correction $\delta_{\text{ML}} \in \mathcal{G}$, where $\mathcal{G}$ denotes a neural network hypothesis class, via
\begin{equation*}
\min_{f \in \mathcal{G}}\frac{1}{N}\sum_{i=1}^N \bigl\|f(x^{(i)})-r^{(i)}\bigr\|^2.
\end{equation*}
By the law of total expectation, we have for the loss functional
\[
\mathcal{L}(f)=\mathbb{E}\bigl[\|f(x)-r\|^2\bigr]
=\mathbb{E}_{x}\Bigl[\mathbb{E}\bigl[\|f(x)-r\|^2 \mid x\bigr]\Bigr],
\]
and minimizing the inner conditional expectation pointwise in $x$ shows that squared loss targets the conditional mean,
\begin{equation*}
\delta_{\text{ML}}(x)=\mathbb{E}[r\mid x]
=\delta(x) = \widehat{\mathcal{H}}(x^c)-\HRTTOVM(x),
\end{equation*}
assuming sufficient expressiveness of $\mathcal{G}$. Hence the incremental surrogate
\begin{equation*}
\HMLincM(x)=\HRTTOVM(x)+\delta_{\text{ML}}(x)
\end{equation*}
targets the conditional mean $\mathbb E[y\mid x]=\widehat{\mathcal{H}}(x^c)$, with the remaining error governed by the irreducible conditional variance of $\epsilon_{\mathrm m}$. That is, the remaining random component is governed by the generally heteroscedastic covariance
\begin{equation*}
R(x):=\operatorname{Cov}(\epsilon_{\text{m}}\mid x),
\end{equation*}
which can be estimated in a second stage from the conditional second moment of the innovation $y-\HMLincM(x)$ (e.g., via a separate regression model for $R(x)$ or via regime-dependent binning such as clear-sky vs.\ all-sky).

To formalize why the hybrid incremental approach of learning residual corrections can be advantageous, we compare fully data-driven and hybrid incremental learning under squared loss and isolate the respective approximation errors.
\begin{proposition}
\label{prop:residual_better}
Assume the observation model
\begin{equation*}
y = \widehat{\mathcal H}(x^c) + \epsilon_{\mathrm m},
\qquad \mathbb{E}[\epsilon_{\mathrm m}\mid x]=0,
\end{equation*}
and define the operator mismatch relative to some baseline operator $\mathcal H$
\begin{equation*}
\delta(x):=\widehat{\mathcal H}(x^c)-\mathcal H(x),
\quad\text{so that}\quad
y=\mathcal H(x)+\delta(x)+\epsilon_{\mathrm m}.
\end{equation*}

Let $\mathcal G$ be a hypothesis class and define the best-in-class direct and residual predictors
\begin{align*}
&\HMLfullM(x), &&\text{where } \HMLfullM\in \arg\min_{f\in\mathcal G}\mathbb E\bigl[\|f(x)-y\|^2\bigr],\\
&\HMLincM(x):=\mathcal H(x)+\delta_{\text{ML}}(x), &&\text{where }\delta_{\text{ML}} \in \arg\min_{f\in\mathcal G}\mathbb E\bigl[\|f(x)-(y-\mathcal H(x))\|^2\bigr].
\end{align*}

Define the approximation errors
\begin{align*}
\varepsilon_{\text{ML-full}}:= \inf_{f\in\mathcal G}\mathbb{E}\!\left[\|f(x)-(\mathcal H(x)+\delta(x))\|^2\right],\qquad
\varepsilon_{\text{ML-inc}}:= \inf_{f\in\mathcal G}\mathbb{E}\!\left[\|f(x)-\delta(x)\|^2\right].
\end{align*}
Then
\begin{align}
\mathbb E\!\left[\|\HMLfullM(x)-y\|^2\right]
&= \varepsilon_{\text{ML-full}} + \mathbb E_x\!\left[\mathbb E\bigl[\|\epsilon_{\mathrm m}\|^2\mid x\bigr]\right],\label{eqn:dir}\\
\mathbb E\!\left[\|\HMLincM(x)-y\|^2\right]
&= \varepsilon_{\text{ML-inc}} + \mathbb E_x\!\left[\mathbb E\bigl[\|\epsilon_{\mathrm m}\|^2\mid x\bigr]\right]\label{eqn:indir}.
\end{align}
\end{proposition}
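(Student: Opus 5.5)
The plan is a bias--variance (Pythagorean) decomposition of the squared loss, applied once to the direct problem and once to the residual problem. For any measurable $f$ with finite second moments, write $y=\widehat{\mathcal H}(x^c)+\epsilon_{\mathrm m}$. Then
\begin{equation*}
\|f(x)-y\|^2=\|f(x)-\widehat{\mathcal H}(x^c)\|^2-2\langle f(x)-\widehat{\mathcal H}(x^c),\epsilon_{\mathrm m}\rangle+\|\epsilon_{\mathrm m}\|^2 .
\end{equation*}
Conditioning on $x$, the vector $f(x)-\widehat{\mathcal H}(x^c)$ is $x$-measurable (since $x^c=\mathcal I x$). It therefore factors out of the conditional expectation, and the cross term vanishes by $\mathbb E[\epsilon_{\mathrm m}\mid x]=0$. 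The tower property then gives, for every $f$,
\begin{equation*}
\mathbb E\bigl[\|f(x)-y\|^2\bigr]=\mathbb E\bigl[\|f(x)-(\mathcal H(x)+\delta(x))\|^2\bigr]+\mathbb E_x\bigl[\mathbb E[\|\epsilon_{\mathrm m}\|^2\mid x]\bigr],
\end{equation*}
using $\widehat{\mathcal H}(x^c)=\mathcal H(x)+\delta(x)$. The noise term does not depend on $f$.

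For \eqref{eqn:dir}, I will take the infimum over $f\in\mathcal G$ on both sides of this identity. Since the noise term is constant in $f$, the minimizers coincide, and the infimum of the left side equals $\varepsilon_{\text{ML-full}}$ plus the noise term. Evaluating at the best-in-class $\HMLfullM$, which attains the infimum by assumption, yields \eqref{eqn:dir}.

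For \eqref{eqn:indir}, I will repeat the argument with target $r=y-\mathcal H(x)=\delta(x)+\epsilon_{\mathrm m}$. The cross term $\langle f(x)-\delta(x),\epsilon_{\mathrm m}\rangle$ again has zero conditional mean, so
\begin{equation*}
\mathbb E\bigl[\|f(x)-r\|^2\bigr]=\mathbb E\bigl[\|f(x)-\delta(x)\|^2\bigr]+\mathbb E_x\bigl[\mathbb E[\|\epsilon_{\mathrm m}\|^2\mid x]\bigr].
\end{equation*}
Hence $\delta_{\text{ML}}$ attains $\varepsilon_{\text{ML-inc}}$ in the first term. The last step is the algebraic identity $\HMLincM(x)-y=\delta_{\text{ML}}(x)-r$, so the loss of $\HMLincM$ against $y$ equals the residual loss of $\delta_{\text{ML}}$, which gives \eqref{eqn:indir}.

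There is no real obstacle here. The only delicate point is justifying that the cross term vanishes. This needs $\widehat{\mathcal H}(x^c)$, $\mathcal H(x)$ and $f(x)$ to be functions of $x$ alone, which holds by construction. It also needs integrability: finite second moments of $y$, $\mathcal H(x)$, $\delta(x)$ and $f(x)$, so that all the expectations, and the inner products conditioned on $x$, are well defined. I would state these as standing assumptions at the start of the proof. Finally, I will remark that the existence of the argmin elements is part of the hypothesis. Without it, the same identities hold with infima, applied to any minimizing sequence.
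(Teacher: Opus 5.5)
Your proposal is correct and follows essentially the same route as the paper. Both use a conditional Pythagorean decomposition in which the cross term vanishes by $\mathbb E[\epsilon_{\mathrm m}\mid x]=0$, followed by minimization over $\mathcal G$, applied once to the target $y$ and once to the residual target $y-\mathcal H(x)$. Your explicit identity $\HMLincM(x)-y=\delta_{\text{ML}}(x)-(y-\mathcal H(x))$ and your stated integrability assumptions simply make precise what the paper compresses into the remark that adding back $\mathcal H(x)$ does not change the noise term.
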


\begin{proof}
Using $y=\mathcal H(x)+\delta(x)+\epsilon_{\mathrm m}$ and $\mathbb E[\epsilon_{\mathrm m}\mid x]=0$,
\begin{align*}
\mathbb E\bigl[\|f(x)-y\|^2 \mid x\bigr]
&=\|f(x)-(\mathcal H(x)+\delta(x))\|^2 + \mathbb E\bigl[\|\epsilon_{\mathrm m}\|^2\mid x\bigr],
\end{align*}
since the cross term vanishes conditionally. Taking expectations over $x$ yields
\[
\mathbb E\bigl[\|f(x)-y\|^2\bigr]=\mathbb E\bigl[\|f(x)-(\mathcal H(x)+\delta(x))\|^2\bigr]
+ \mathbb E_x\bigl[\mathbb E[\|\epsilon_{\mathrm m}\|^2\mid x]\bigr].
\]
Minimizing over $f\in\mathcal G$ gives the first identity with $\varepsilon_{\text{ML-full}}$.

For hybrid incremental learning, let $f$ approximate $y-\mathcal H(x)=\delta(x)+\epsilon_{\mathrm m}$. The same argument gives
\[
\mathbb E\bigl[\|f(x)-(y-\mathcal H(x))\|^2 \mid x\bigr]
=\|f(x)-\delta(x)\|^2+\mathbb E[\|\epsilon_{\mathrm m}\|^2\mid x],
\]
and adding back $\mathcal H(x)$ does not change the noise term, yielding the second identity.
\end{proof}

\begin{remark}
In many applications, the operator mismatch $\delta(x)$ is empirically ``simpler'' than the full mapping
$\widehat{\mathcal H}(x^c)$ (e.g., smaller amplitude, smoother spatial structure, or closer to a linear response around the baseline operator).
For common approximation families, simpler targets typically admit smaller best-in-class $L^2$ approximation error, so one often observes
$\varepsilon_{\text{ML-inc}} \ll \varepsilon_{\text{ML-full}}$ and therefore (\ref{eqn:indir}) $\ll$ (\ref{eqn:dir}). It turns out that this is also the case here, see Section 5 below.
\end{remark}

\section{Data}\label{sec:seviri}
SEVIRI is a multispectral imager onboard the Meteosat Second Generation satellites that observes the Earth in visible, near-infrared, and thermal infrared channels, providing key information on surface and atmospheric properties. This study uses three weeks of data from 10 SEVIRI channels on METEOSAT-10, combining observed radiances with numerical weather prediction data from the ICON model, i.e.\ the NWP used by DWD (\cite{Zaengl2015, icon-model-website}), and simulated radiances generated by the RTTOV radiative transfer model. The dataset includes detailed atmospheric profiles, surface variables, viewing geometry, and metadata, enabling joint analysis of measured, modeled, and simulated observations and is described in detail in the sequel.

\subsection{Data Description}
\label{sec:dataDescription}
The Spinning Enhanced Visible and Infrared Imager (SEVIRI) is a key instrument onboard the Meteosat Second Generation (MSG) satellites operated by EUMETSAT. SEVIRI is designed to observe the Earth in 12 spectral channels, allowing it to provide detailed imagery. These channels include both infrared and visible light bands, with eight channels dedicated to thermal infrared, which are essential for measuring the temperatures of clouds, land, and sea surfaces. Furthermore, certain channels are specifically sensitive to water vapour providing detailed information on the composition of the atmosphere, see \url{https://space.oscar.wmo.int/instruments/view/seviri}. 

The study utilizes data from 10 channels of the SEVIRI instrument onboard METEOSAT-10. Over a three week period, satellite data, comprising measured radiances and reflectances, have been gathered. These data files (e.g., \texttt{satellite\_data\_202401251200.nc}) have undergone a data curation process that integrates them with NWP Model data from the ICON model, including atmospheric columns of temperature, pressure, humidity, and cloud variables, see Table~\ref{tbl:data}. Furthermore, these are combined with simulated satellite data produced by RTTOV software (version $13.2.0$), along with essential metadata such as time and location, satellite zenith angle and sun zenith angle listing all available data and its type.
The table's last column ({\em{source}}) indicates the origin of the data, namely ICON if the data originates from the NWP model, SAT if the data corresponds to satellite measurements and is obtained from the satellite data provider, and RTTOV if they are derived by applying the observation operator RTTOV to the NWP data for simulating the measurements.

\begin{table}[h]
	\centering
	\caption{Spectral Channels of SEVIRI used}
	\label{tbl:channels}
	\begin{tabular}{llll} 
		\toprule
		\textbf{Channel-type} & \textbf{Wavelength [\unit{\micro\meter}]} & \textbf{Unit} & \textbf{Channel ID} \\
		\midrule
		VIS & 0.6 & \unit{1} & 0\\
		VIS & 0.8 & \unit{1} & 1\\
		Near IR & 1.6 & \unit{1} & 2\\
		IR & 3.9 & \unit{K} & 3\\
		IR & 6.2 & \unit{K} & 4\\
		IR & 7.3 & \unit{K} & 5\\
		IR & 8.7 & \unit{K} & 6\\
		IR & 9.7 & \unit{K} & 7\\
		IR & 10.8 & \unit{K} & 8\\
		IR & 12.0 & \unit{K} & 9\\
		\bottomrule
	\end{tabular}
\end{table}

\begin{table}[!ht]
	\center
	\caption{Data Description}
	\label{tbl:data}
	\newcolumntype{L}{>{\arraybackslash}m{3.7cm}}
	\begin{tabular}{|l|l|L|l|l|l}
		\hline
		\textbf{Name} & \textbf{Shape} & \textbf{Description} & \textbf{Units} & \textbf{Source}\\
		\hline
		chan & 10 & satellite channel & -- & SAT\\
		lon & 96326 & longitude & \unit{\deg} & SAT / ICON \\
		lat & 96326 & latitude &  \unit{\deg} & SAT / ICON\\
		t & (96326, 68) & temperature & \unit{K} & ICON\\
		p & (96326, 68) & pressure & \unit{hPa} & ICON\\
		qv & (96326, 68) & specific humidity & \unit{\kg/kg} & ICON\\
		qc\_dia & (96326, 68) & diagnostic cloud water & \unit{g/m^3} & ICON \\
		qi\_dia & (96326, 68) & diagnostic cloud ice & \unit{g/m^{3}} & ICON\\
		cfrac & (96326, 68) & cloud fraction & 1 & ICON\\
		dens & (96326, 68) & total air density &  \unit{kg/m^3} & ICON\\
		rad\_allsky & (96326, 10) & reflectance, brightness temperature with clouds & \unit{1} or \unit{K} & ICON / RTTOV\\
		rad\_clearsky & (96326, 10) & reflectance, brightness temperature with no clouds & \unit{1} or \unit{K} & ICON / RTTOV\\
		obs\_rad & (96326, 10) & observed reflectance, brightness temperature by SEVIRI & \unit{1} or \unit{K} & SAT\\
		t\_surf & 96326 & surface skin temperature & \unit{K} & ICON\\
		p\_surf & 96326 & surface pressure & \unit{hPa} & ICON\\
		t2m & 96326 & \qty{2}{m} temperature & \unit{K} & ICON\\
		q2m & 96326 & \qty{2}{m} specific humidity & \unit{\kg\per \kg} & ICON\\
		u10m & 96326 & \qty{10}{m} zonal velocity & \unit{m/s} & ICON\\
		v10m & 96326 & \qty{10}{m} meridional velocity & \unit{m/s} & ICON\\
		watertype & 96326 & watertype: fresh, ocean & 1 & ICON\\
		stype & 96326 & surface type: land, sea, sea-ice & 1 & ICON\\
		sun\_zen & 96326 & sun zenith angle &  \unit{\deg} & SAT\\
		sun\_azi & 96326 & sun azimuth angle &  \unit{\deg} & SAT\\
		sat\_zen & 96326 & satellite zenith angle &  \unit{\deg} & SAT\\
		sat\_azi & 96326 & satellite azimuth angle &  \unit{\deg} & SAT\\
		\hline
	\end{tabular}
\end{table}

The measurements of the different SEVIRI channels are highly correlated, see \cite{Dance2016}. Our own analysis confirms this and an example of the correlation structure is depicted in Fig.~\ref{fig:corr}. 
A clear block structure is discernible. The first three channels, located in the visible and near-infrared ranges, exhibit high correlation, as do the seven infrared (IR) channels. Additionally, the near-infrared (NIR) channel displays correlations extending into the infrared range.

\begin{figure}[!ht]
    \centering
    \includegraphics[width=0.6\linewidth]{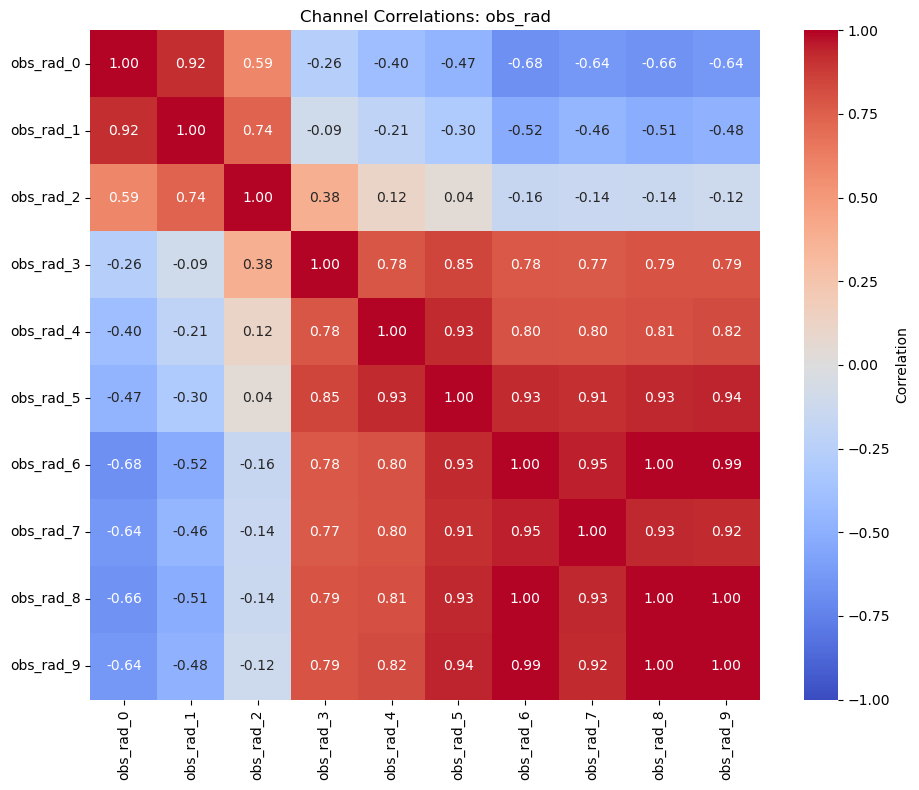}
    \caption{Channel correlations of observed reflectance (\texttt{obs\_rad})}
    \label{fig:corr}
\end{figure}

The observation data $y_k$ does not only carry a temporal structure but also an important spatial one as every data point measured is tagged with the \texttt{lat}, and \texttt{lon} coordinates of the corresponding point on the planet's surface. 

\subsection{RTTOV Calculation and Relation to Observation Error}
The observation operator $\mathcal{H}_{RTTOV}$ in our context is comprised of the Radiative Transfer for TOVS (RTTOV) model as described in Section~\ref{sec:rttov}. As indicated in Table~\ref{tbl:data} it is not necessary to calculate it as part of the training process as it is provided by ICON as part of the SEVIRI data files which already include \texttt{rad\_allsky}, and \texttt{rad\_clearsky}. We remind that $\HMLincM(x) := \HRTTOVM(x) + \delta_{\text{ML}}(x)$. Therefore, we learn \HMLinc by training the model to predict the residuals between the RTTOV simulations and the observed satellite data.

\section{Implementation}
In the following, we describe the implementation of our machine learning framework, which was applied to realize \HMLfull and \HMLinc. A fundamental difference from radiative transfer models that ignore spatial correlations and perform a (vertical) column integration is that both of our models intrinsically operate on two-dimensional spatial maps and thus appropriately account for spatial correlations. To this end, we map spatially irregular satellite observations onto a regular Cartesian grid to enable efficient image-based learning, and then optionally remap the results back to the original satellite coordinates. Atmospheric profile data are encoded using a convolutional autoencoder and processed by a 2D U-Net, treating the task as an image-to-image transformation problem. In addition, both operators learn all profiles and all channels (see Section~\ref{sec:dataDescription} and Table~\ref{tbl:data}) simultaneously and therefore, unlike RTTOV, take correlation structures into account. 

The following sections describe the mapping, the data types and the data pipeline, the architecture of the ML model and its training procedure in detail.
\label{sec:rttovEmulator}

\subsection{Spatial Data}
Satellite observations arrive as irregularly spaced measurements along the instrument's scanning geometry and must be mapped to a regular grid structure suitable for neural network processing and data assimilation. The transformation to a regular Cartesian grid is essential for leveraging convolutional neural networks, which exploit spatial correlations through their translation-equivariant architecture. This section describes our approach to handling the spatial structure of SEVIRI data, including the forward mapping from raw satellite coordinates to a regular grid (Raw2Grid) and the reverse interpolation back to original observation locations (Grid2Raw). These transformations enable efficient computation while preserving the ability to evaluate model performance in the satellite's native coordinate system.

\subsubsection{Mapping: Raw2Grid}
\label{sec:raw2grid}
The spatial structure of the data is determined by the satellite’s raw scan geometry. To obtain a manageable representation for machine learning and data assimilation, we apply a thinning preprocessing step similar to the procedure used in operational DWD data assimilation. This reduces the raw observations to a standardized grid, as described in detail below.

The satellite data contain arrays of latitude and longitude positions that are not evenly spaced and therefore do not form an equidistant grid in Cartesian coordinates. In addition, the raw satellite data are typically thinned prior to assimilation, since the full observation set is too dense for efficient computation in operational data assimilation systems. In principle, irregular spatial data could be handled using graph-based representations and graph neural networks. However, since the observation geometry is relatively regular and missing values are rare, we adopt a simpler approach in this work. Specifically, we map the observations onto a regular grid as described below. More sophisticated mappings will be explored in future work.

We applied the following procedure and call it Raw2Grid (Raw to Grid):
\begin{enumerate}
\item The grid is mapped to a--if necessary down-sampled--equidistant (in a cartesian coordinate system) spaced two dimensional grid.
\item Both the state data $x_k$ and the observation data $y_k$ are mapped onto this grid.
\item If necessary some simple interpolation is used.
\end{enumerate}

This provides the following advantages
\begin{enumerate}
\item It is conceptually simple.
\item Any data is mapped to data that can be displayed (and regarded) as an image which has several advantages for debugging and for the ML data pipeline.
\item The grid can be tailored to meet the available resources.
\end{enumerate}

\subsubsection{Reverse Mapping: Grid2Raw}
\label{sec:grid2raw}
We do not consider only the mapping of the satellite's raw data to a (smaller) grid as explained above but also investigate the \emph{inverse} mapping from the grid back to the original data (coordinates) and call it \emph{Grid2Raw} (Grid to Raw):

The motivation for reversing the mapping is threefold: First of all, as the mapping from the satellite to the mapped grid coordinate system is a lossful transformation, it is instructive to compare the results in the original coordinate system of the satellite, i.e., in the coordinate system of the raw data and \emph{not} only in the mapped grid coordinate system, to assess the impact of the lossful transformation on the estimation error. Secondly, integrating the reverse mapping into the loss function enables to minimize RMSE directly on the raw data during learning. Finally, implementing and validating the reverse mapping constitutes a first step toward a comprehensive stability analysis with respect to discretization (grid invariance), which we plan to address in future work.

A satellite coordinate is defined by its \texttt{lat} and \texttt{lon} numbers. Therefore, in order to determine the mapping back to the satellite coordinate system one just has to map the values from the (regular) grid to an arbitrary point inside the grid. This is a well-known multivariate interpolation problem and can be solved in many ways.
We decided to use bilinear interpolation implemented via PyTorch's \textit{grid\_sample} method. The multivariate interpolation we implemented extends standard bilinear interpolation methods as described in \cite{WeiserEtAl} and handles the simultaneous interpolation of multiple variables and channels.
For computational efficiency, we employ a hybrid approach that uses direct grid lookup and bilinear interpolation to map from the \texttt{[175×214]} grid back to \num{96326} scattered/raw target points. The bilinear interpolation normalizes coordinates to the \texttt{[-1, 1]} range required by \textit{grid\_sample} and processes all variables and channels simultaneously.

\subsection{Data Handling}
Our entire training dataset contains 500 NetCDF files, each containing RTTOV inputs and outputs and corresponding satellite observations from Meteosat-10 with associated metadata. Each file corresponds to one hour of satellite observations, resulting in a total time span of 3 weeks.
We split the dataset into training, validation and testing subsets, separating each subset into distinct folders before preprocessing as described in Section~\ref{sec:raw2grid} to prevent data leakage.
We used a \qty{70}{\percent} training, \qty{20}{\percent} validation, and \qty{10}{\percent} testing split. As the satellite observations, specifically the first three channels (see Table~\ref{tbl:channels}) contain NaN values, which are mapped to very large values of magnitude $10^{30}$ when loaded using \textit{xarray}. We identified these large values and replaced them with NaN values. Subsequently, we implement NaN-aware loss functions that ignore any occurrence of NaN values during training. These invalid values appear primarily during nighttime when insufficient light is reflected from Earth's surface, indicating ``no measurement'' conditions and justifying their exclusion from the loss computation.

\subsection{Model architecture}
To select a suitable model architecture, we formulate the learning problem as an image-to-image transformation applied independently at each time step, without an explicit temporal sequence model.
U-Nets were introduced by Ronneberger et al.~\cite{Unet} to support bio-medical image segmentation. A U-Net's architecture consists of a contracting path to capture image context and a symmetric expanding path that synthesizes the desired output. This architecture fits well into our use case as the input data consists of images of physical data provided in the assimilation process by the model and of output images of reflectance, and brightness temperature provided as measured observation data by SEVIRI. The architecture and data flow used in our implementation are described below.

\subsubsection{(3D) Convolutional Autoencoder}
To integrate the multi-variable atmospheric profile data, we implement a convolutional autoencoder that reduces the 14 atmospheric variables to a unified learned representation while preserving the spatial grid structure and vertical profile information. 
Using neural networks to reduce the dimensionality of (input) data is a rather old idea going back to \cite{HintonEtAl}. In the past decade instead of plain neural networks, convolutional neural networks have been proposed to encode and decode the data, primarily within the context of convolutional autoencoders, see e.g.~\cite{MasciEtAl} or \cite{BadrinarayananEtAl}. In our architecture, we only use the encoder part of an autoencoder which processes input tensors of shape \texttt{[batch, 175, 214, 14, 68]} representing batch size, longitude, latitude, atmospheric variable and height profile, respectively. The input data is passed through a series of four convolution layers each reducing the size progressively from an input size of 14 to 8, 4, 2, and eventually to 1 dimension while maintaining the spatial and vertical dimensions.

Each convolution uses kernel size $(5,5,3)$ with padding $(2,2,1)$, enabling the network to learn patterns across $5\times5$ spatial neighborhoods in the longitude-latitude grid and 3-level windows in the vertical profile dimension. The first three convolutional-layers are each followed by batch normalization, ReLU activation, and dropout. Fig.~\ref{fig:VarEncUnet} shows how the Convolutional Autoencoder is integrated into our architecture to reduce the dimensionality of our input data.

Through this design, the encoder transforms the input samples from 4D (\texttt{[batch, 175, 214, 14, 68]}) to 3D (\texttt{[batch, 175, 214, 68]}), effectively learning a single integrated representation that captures the relationships between multiple atmospheric variables at each spatial location and pressure level. The 3-dimensional encoded samples can then be processed in an image-to-image manner through our U-Net. 

\subsubsection{U-Net Architecture and Implementation}
The 3-dimensional encoded atmospheric profiles are processed through a 2D-U-Net that treats each spatial location as a pixel with 68 feature channels, enabling image-to-image style processing of the atmospheric data.
As proposed by \cite{Unet} we follow the encoder-decoder architecture, with four encoding and four decoding blocks, skip-connections and a bottleneck between encoding and decoding path.
Each encoder block implements a double convolution, $3 \times 3$ convolutions, $2 \times 2$ max-pooling and batch normalization. 
In addition to the skip-connections between encoding and decoding blocks as of the seminal paper \cite{Unet}, we decided to introduce residual connections in the convolution blocks of the decoder to improve gradient flow \cite{recurrent-residual-cnn}.
For the decoder we used 2-dimensional transposed convolutions to perform learned up-sampling in each decoder block. We chose to use transposed convolutions over interpolation-based up-sampling after observing that the latter produced heavy artifacts on our outputs/reconstructions. 
Each transposed convolution is followed by a residual convolution block consisting of two convolutions with ReLU activation, batch normalization and dropout. Due to the irregular shape ($175 \times 214$) of our input we incorporated additional interpolation to ensure that the up-sampled feature maps match the dimensions of the corresponding encoder feature maps prior to concatenation via skip connections. 
The output layer consists of a $3 \times 3$ convolution with ReLU activation, followed by reflection padding and another convolution with a $1 \times 1$ kernel that maps the 64 features to the target number of output channels. 

\begin{figure}[!ht]
    \centering
    \includegraphics[width=\linewidth]{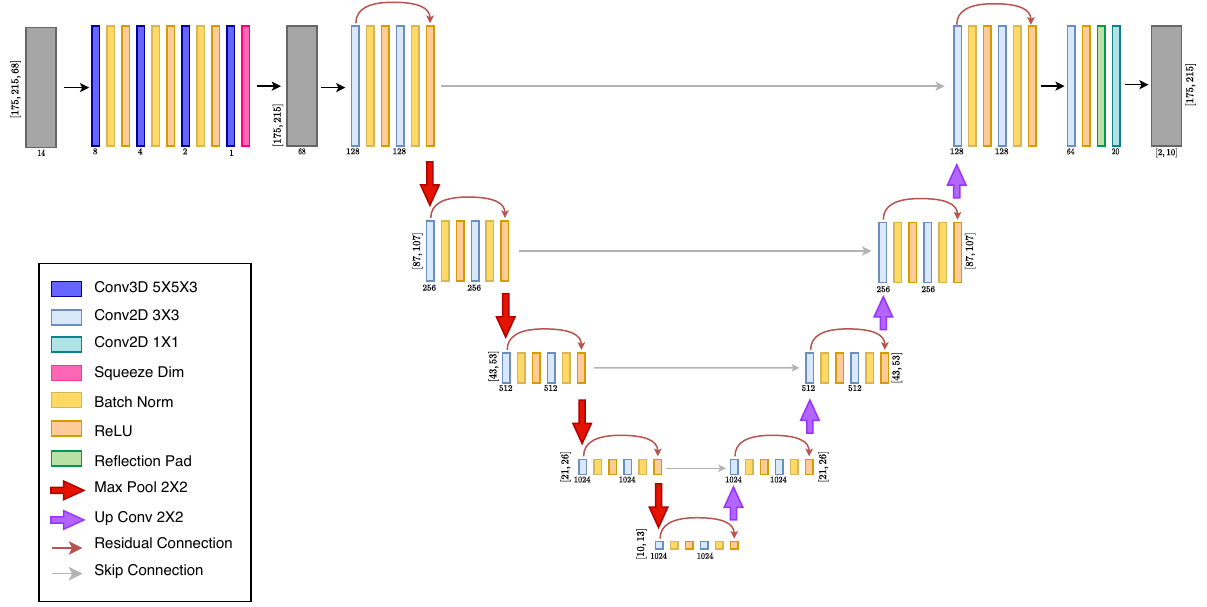}
    \caption{Our U-Net-based implementation including a convolutional autoencoder.}
    \label{fig:VarEncUnet}
\end{figure}


\subsubsection{Two Observation Operators \HMLfull and \HMLinc}
We develop two different  observation operators \HMLfull and \HMLinc to evaluate against the established RTTOV model:
\begin{enumerate}
        \item \HMLfull: This operator performs direct learning of the observed (\texttt{obs\_rad}) values. The model predicts all 10 channels of \texttt{obs\_rad} over the entire grid, which results in an output shape of \texttt{[batch\_size, 175, 214, 10]}, see Section~\ref{sec:inoutdata}. This corresponds to a fully data-driven surrogate of the RTTOV model, trained directly on observed radiances without requiring RTTOV simulations in the training loop.        
        \item \HMLinc: This operator performs residual learning using two target residuals, \texttt{obs\_rad-rad\_allsky} and \texttt{obs\_rad-rad\_clearsky}, corresponding to an incremental correction formulation. We consider two target residuals because corrections can be learned with respect to both RTTOV clear-sky and RTTOV all-sky simulations.
        
         The model predicts the residuals for all 10 channels of \texttt{rad\_allsky} and \texttt{rad\_clearsky} for the entire grid, resulting in an output shape of \texttt{[batch\_size, 175, 214, 2, 10]}, see Section~\ref{sec:inoutdata}. In order to evaluate our incremental correction model against pure RTTOV, we reconstruct \texttt{obs\_rad} from our learned residuals and compare it to the target \texttt{obs\_rad} from the SEVIRI data. We perform this reconstruction by adding our predicted residuals to the RTTOV outputs \texttt{rad\_allsky} and \texttt{rad\_clearsky} consistent with Eq.~\ref{eq:hmlinc_def}. \\
Note, that this operator does not need any RTTOV model calculations, as one has to use these only in order to calculate \texttt{obs\_rad} back from the residuals.This yields two reconstructions of \texttt{obs\_rad}--one corresponding to each RTTOV output--which we then compare against their respective RTTOV references in both grid and raw formats. 
\end{enumerate}    
The input $\mathbf{x}_k \in \mathbb{R}^{H \times W \times n_v \times n_p}$ is a tensor where $n_v = 14$ atmospheric variables (see Table~\ref{tbl:inputVariables}), $n_p = 68$ vertical profile levels, and $H \times W = 175 \times 215$ denote the spatial grid dimensions after our mapping, as described in Subsection \ref{sec:raw2grid}. 
For training \HMLfull, the target output $\mathbf{y}_k \in \mathbb{R}^{H \times W \times n_c}$ comprises $n_c = 10$ simulated satellite channels over the same spatial grid. For training \HMLinc, the target additionally includes a mode dimension for RTTOV outputs (clear-sky and all-sky), yielding 
$\mathbf{y}_k \in \mathbb{R}^{H \times W \times n_m \times n_c}$ with $n_m = 2$. In \texttt{PyTorch} parlance, both tensors are effectively \emph{multi-channel images} of spatial extent $H \times W$.

We optimize both models by minimizing the Mean Squared Error MSE or RMSE between the model's predictions and the target satellite observations, respectively the target residual values.
           
\subsection{Data}
\label{sec:inoutdata}
Both learned operators, \HMLinc and \HMLfull, take the same input data. They differ, however, in their learning targets: \HMLfull directly predicts the observed radiances \texttt{obs\_rad}, whereas \HMLinc predicts two residual fields, namely \texttt{obs\_rad - rad\_allsky} and \texttt{obs\_rad - rad\_clearsky}. The corresponding input and target data formats are described in detail below.

\subsubsection{Input Data}
From the raw data we create the input data as a tensor of shape \texttt{[num\_data, 175, 214, 14, 68]}  where $175\times 214$ corresponds to the grid of lat/long, and $14$ corresponds to the fourteen input variables depicted in Table~\ref{tbl:inputVariables} and $68$ denotes $68$ profiles corresponding to the levels (height). The data has been carefully chosen by domain-experts as a proper subset of avaialble data to be used as an initial feature candidate set, see Tables~\ref{tbl:data} and~\ref{tbl:inputVariables}. 

\begin{table}[!ht]
	\centering
	\caption{Input Variables}
	\begin{tabular}{ll} 
		\toprule
		\textbf{Data} & \textbf{Unit}\\
		\midrule
		Pressure & \unit{hPa}\\
		Specific humidity & \unit{kg/kg}\\
		Temperature & \unit{K}\\
		Diagnostic cloud water & \unit{g/m^3}\\
		Diagnostic cloud ice & \unit{g/m^3}\\
		Total air density & \unit{kg/m^3}\\
		Cloud fraction & --\\
		Sun zenith angle & \unit{\deg}\\
		Sun azimuth angle & \unit{\deg}\\
		Satellite zenith angle & \unit{\deg}\\
		Satellite azimuth angle & \unit{\deg}\\
		Sinus-encoded date & --\\
		Cosinus-encoded date & --\\
		Daytime & full hours as integer\\
		\bottomrule
	\end{tabular}
	\label{tbl:inputVariables}
\end{table}

\begin{remark}
In order to take into account the seasonality of time, i.e.\ the essential periodicity of time $t$ throughout the year we encode the date $t$ of the observation into a sinus-encoded and cosines encoded date by the formulas\footnote{assuming 365 days per year} $t \to \sin(2\pi t/365)$  and $t \to \cos(2\pi t/365)$ resp.\ resulting in a value range in $[-1,1]$.
\end{remark}

\subsubsection{Output Data}
For results we create output data according to the data depicted in Table~\ref{tbl:outputVariables}. This results in a tensor shape of \texttt{[num\_data, 175, 214, d, 10]}, where $175\times 214$ corresponds to the same grid of lat/long as in the input data, and $10$ to the $10$ satellite channels. The dimension $d\in\{2,1\}$ is different for the two learned operators. As \HMLinc learns the two residuals \texttt{obs\_rad - rad\_allsky} and \texttt{obs\_rad - rad\_clearsky}, we have $d=2$ in this case and as  \HMLfull only learns a single value, \texttt{obs\_rad}, we get $d=1$. 

\begin{table}[!ht]
	\centering
	\caption{Output Variables}
	\begin{tabular}{lll} 
		\toprule
		\textbf{Model} & \textbf{Data} & \textbf{Unit} \\
		\midrule
		\multirow{2}{4em}{\HMLinc} &obs\_rad - rad\_allsky&Percentages for Channels 1-3 or \unit{K} otherwise\\ 
		&obs\_rad - rad\_clearsky&Percentages for Channels 1-3 or \unit{K} otherwise \\ 
		\midrule
		\HMLfull &obs\_rad &Percentages for Channels 1-3 or \unit{K} otherwise\\ 
		\bottomrule
	\end{tabular}
	\label{tbl:outputVariables}
\end{table}

\subsection{Data Flow and Data Cleansing}
\label{data-flow}
The data flow is comprised of the following steps:
\begin{enumerate}
    	\item The raw data is mapped onto the grid as described in ~Subsection~\ref{sec:raw2grid}.
	\item From raw data we create the input tensor of shape \texttt{[num\_data, 175, 214, 14, 68]} as described above.
    Our grid covers central Europe, i.e. \qty{0}{\degree} to \qty{17}{\degree} longitude and \qty{43.5}{\degree} to \qty{57.5}{\degree} latitude with a \qty{0.08}{\degree} resolution. 
	\item Data Cleansing: Due to the nature of the satellite measurement wavelengths, some channels of \texttt{obs\_rad} carried infinite-values during the night, mostly the first three channels (see Table~\ref{tbl:channels}). These infinite-values are interpreted as ``no measurement'' and are stored as fill-values of magnitude $1 \times 10^{36}$. 
    As infinite or very large values lead to exploding gradients and an unstable or even impossible training process, we mask all values exceeding $1 \times 10^{30}$ and map them to NaN-values. This threshold was selected to reliably capture the fill-values of magnitude $1 \times 10^{36}$ used to represent ``no measurement'' cases, while being well below this range to ensure robust detection. The threshold of $1 \times 10^{30}$ is completely outside the valid measurement range, making it safe for identifying problematic values.
	\item After cleansing, the data then is processed through the U-NET architecture as described above.
	\item The output for \HMLinc and \HMLfull respectively is produced for all of the $10$ satellite channels corresponding to different wavelengths as described in Table~\ref{tbl:outputVariables}. 
     \item As a result of data cleansing we have converted large values to NaN, we employ a NaN-aware loss functions (RMSE, MSE) that exclude NaN-values from both predictions and targets during loss computation.
	\item For detailed analysis, the results are mapped back to the raw coordinates as described in Subsection~\ref{sec:grid2raw}.
\end{enumerate}

\subsection{Training}    
We train both models by minimizing MSE between the model predictions and the respective targets, the two residual values for \HMLinc and the single \texttt{obs\_rad} value for \HMLfull. Both models use the Adam optimizer with a one-cycle learning rate policy \cite{smith2019super}, with a warm-up phase spanning \qtyrange{35}{40}{\percent} of total training steps followed by cosine annealing.
    
As mentioned above to evaluate our incremental correction model against RTTOV, we also reconstruct \texttt{obs\_rad} from our learned residuals and compare it to the target \texttt{obs\_rad} from the SEVIRI data. 

In our experiments, the most stable runs consistently used small batch sizes (e.g., 4 and 8) and peak learning rates on the order of $1 \times 10^{-6}$ to $1 \times 10^{-7}$; a systematic hyperparameter search across batch size and learning rate combinations was outside the scope of this work.

\begin{figure}[!ht]
    \centering
     \includegraphics[width=0.6\linewidth]{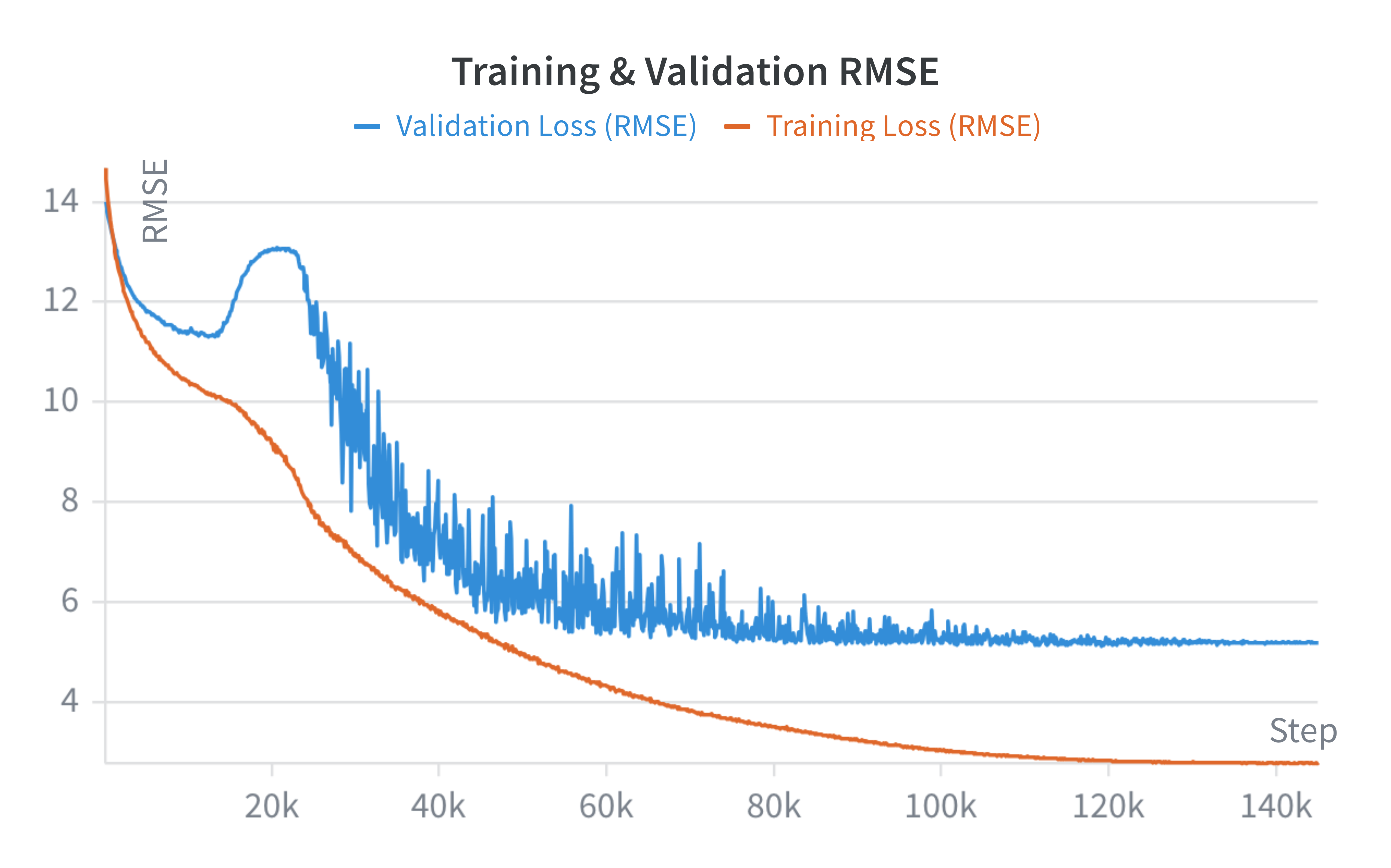}
    \caption{Training (orange) and validation (blue) loss (RMSE)}
    \label{fig:TrainValLossCurve}
\end{figure}

Fig.~\ref{fig:TrainValLossCurve} shows the training and validation loss curves computed on the residual targets, converging to a validation RMSE of approximately $5$. The early fluctuations between steps \num{20000} and \num{40000} are attributable to the learning rate warm-up phase, after which both curves stabilize and converge smoothly.

\section{Results}
In this section we evaluate the two learned operators, the full-surrogate model and the incremental correction operator, against the reference satellite observations. Performance is assessed both visually (sample images and residual maps) and quantitatively (RMSE and MSE) on the held-out test set, analyzing results in both the regular grid domain and the original raw satellite coordinate system.

\subsection{Visual Inspection}
Figures~\ref{fig:compSample15Channel5} and ~\ref{fig:compSample15Channel3} represent a typical example of a visual comparison between RTTOV simulations (left column), satellite observations (top right), our incremental model's predicted reconstructions (middle column) and our fully data-driven model's prediction (bottom right). 
Furthermore, channel-wise comparisons of \HMLinc against RTTOV and satellite observations can be found in the appendix (Figs.~\ref{fig:channelOverview1}--\ref{fig:channelOverview3}).

\begin{figure}[!h]
    \centering
        \subfloat[Comparison of outputs of \HRTTOV (left column), \HMLinc (middle column) and \HMLfull (bottom right) versus the observed radiance \texttt{obs\_rad} (top right) in grid domain. ]
        {
        \label{fig:compSample15Channel5}
        \includegraphics[width=1\linewidth]{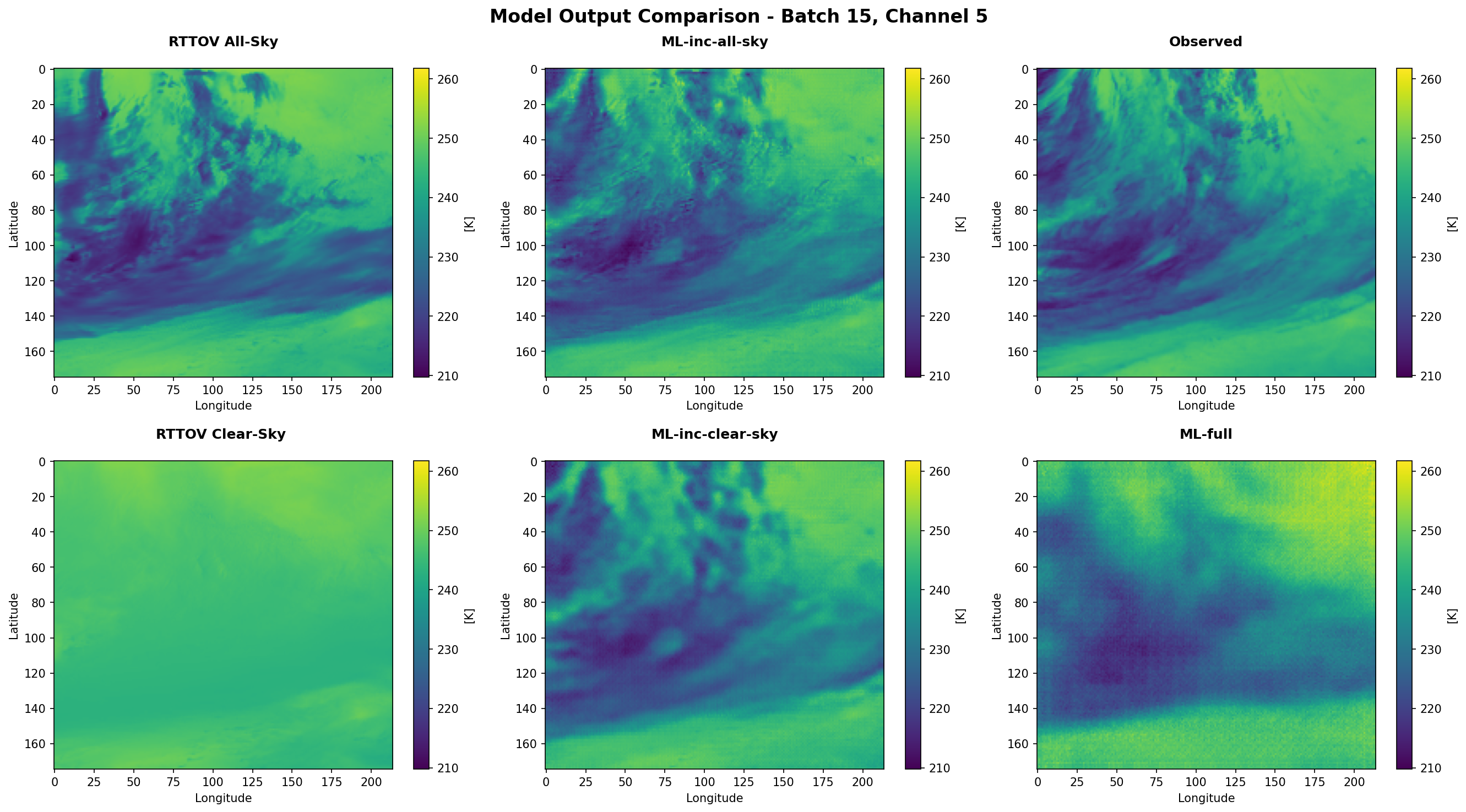}
        }\\
        \subfloat[Difference plots between \HRTTOV (left column), \HMLinc (middle column) and \HMLfull (bottom right) and the observed radiance \texttt{obs\_rad} (top right) in grid domain. ]
        {
        \label{fig:diffSample15Channel5}
        \includegraphics[width=1\linewidth]{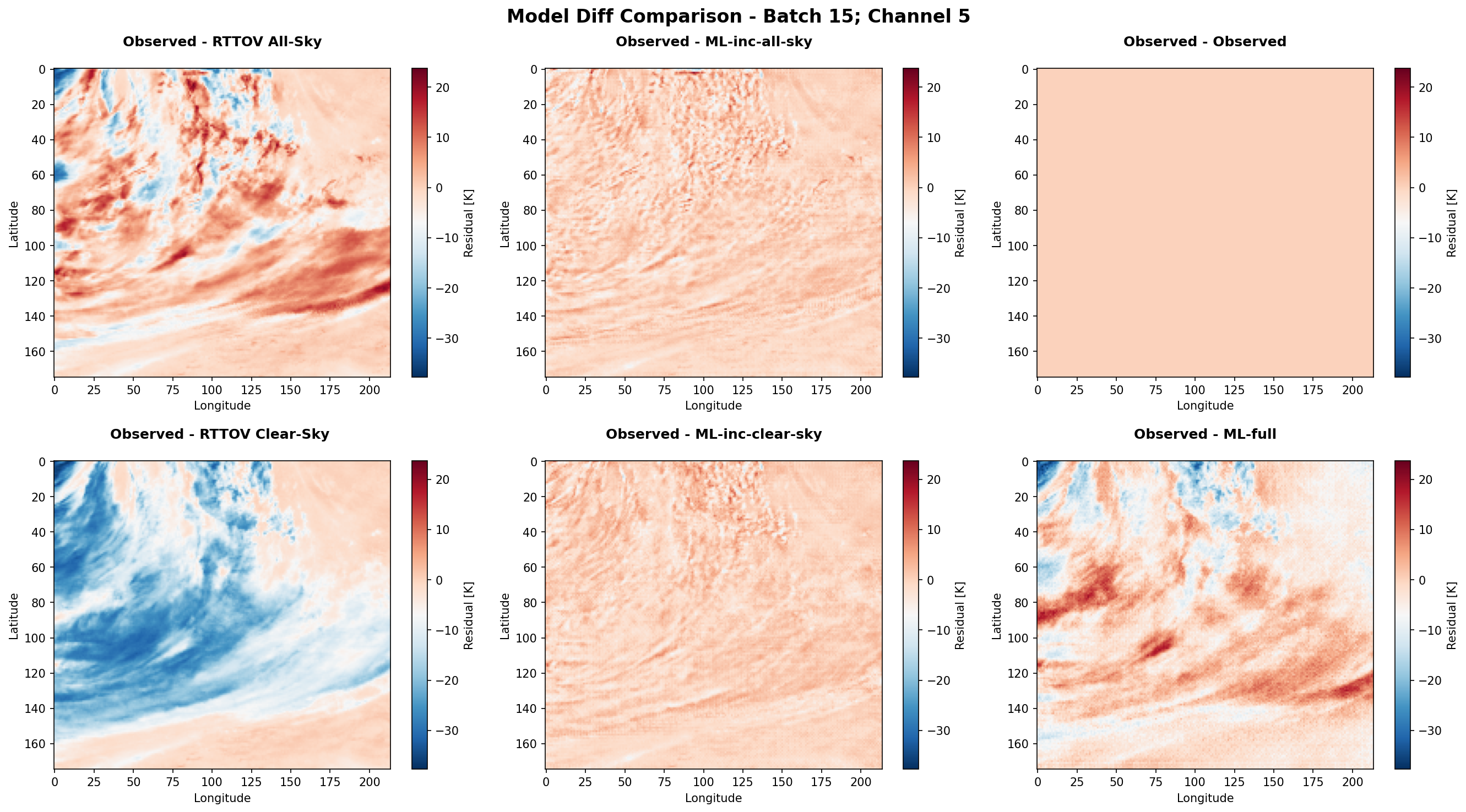}
        }  
    \caption{Model Comparison of \HRTTOV, \HMLinc, and \HMLfull versus satellite observation on sample 15, channel 5 from our testing dataset.}
    \label{fig:sample15Channel5}
\end{figure}
    
\begin{figure}[!h]
    \centering
        \subfloat[Comparison of outputs of \HRTTOV (left column), \HMLinc (middle column) and \HMLfull (bottom right) versus the observed radiance \texttt{obs\_rad} (top right) in grid domain.]
        {
        \label{fig:compSample15Channel3}
        \includegraphics[width=1\linewidth]{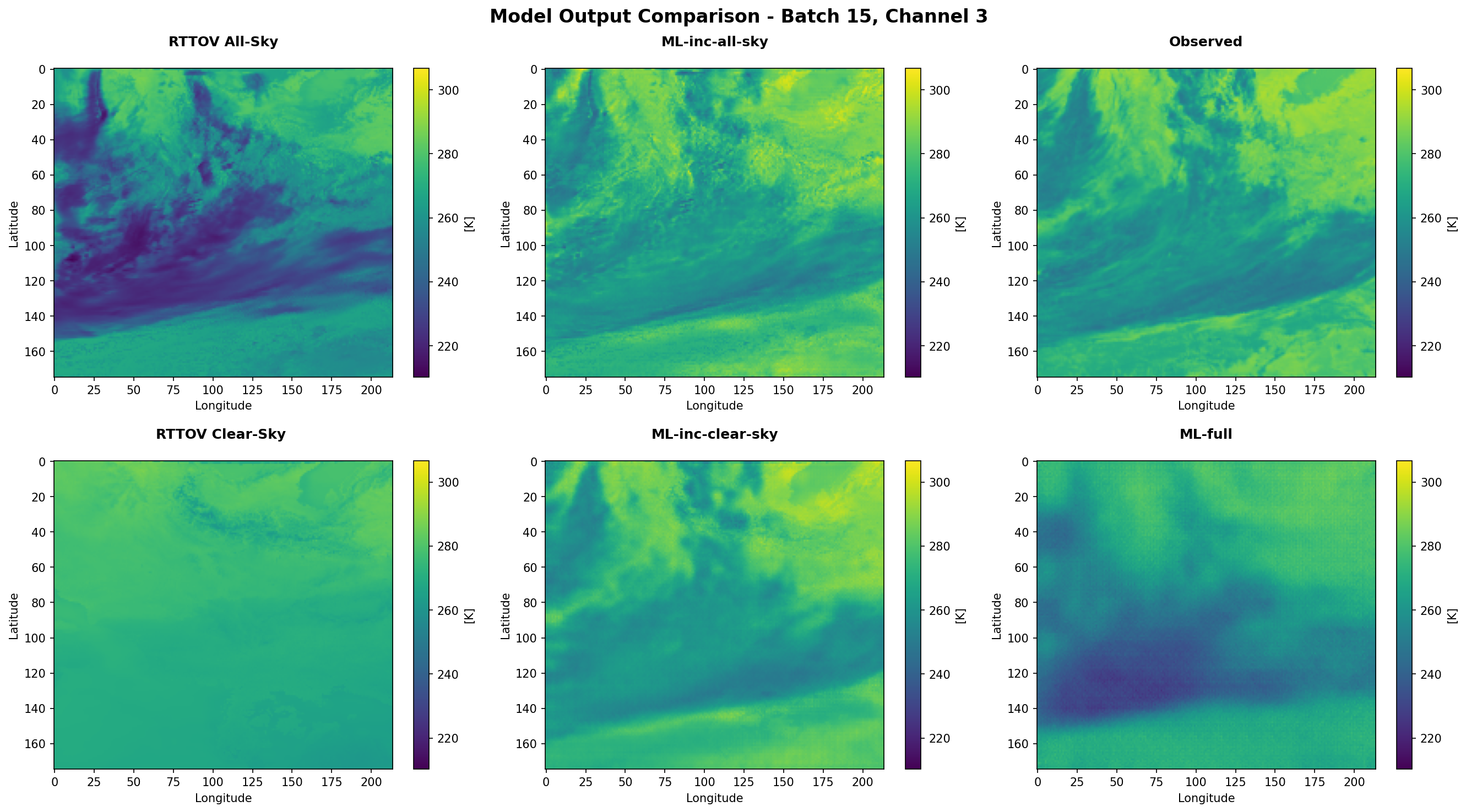}
        }\\
        \subfloat[Difference plots between \HRTTOV (left column), \HMLinc (middle column) and \HMLfull (bottom right) and the observed radiance \texttt{obs\_rad} (top right) in grid domain.]
        {
        \label{fig:diffSample15Channel3}
        \includegraphics[width=1\linewidth]{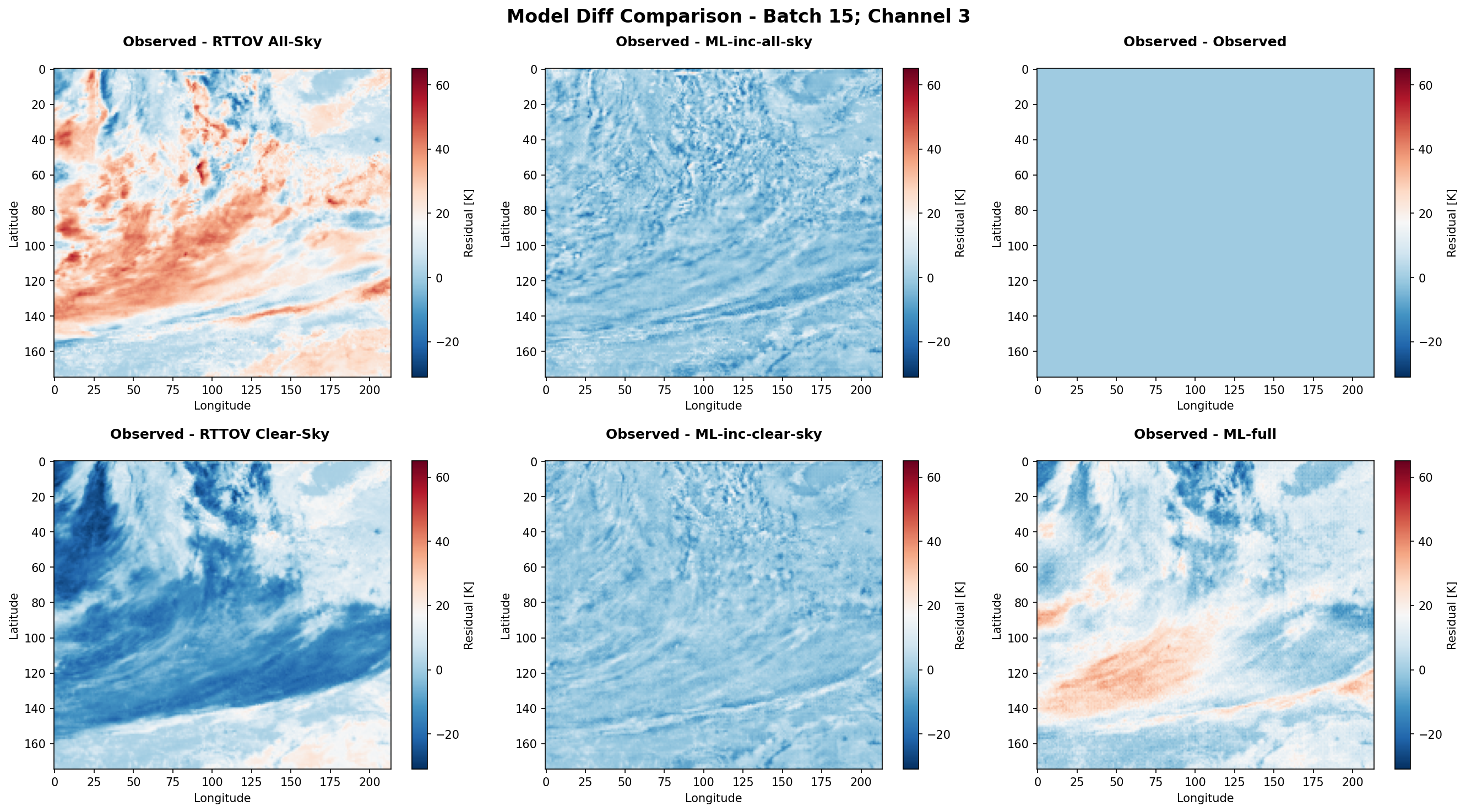}
        }  
    \caption{Model Comparison of \HRTTOV, \HMLinc, and \HMLfull against satellite observation for sample 15, channel 3 from our unseen testing dataset.}
    \label{fig:sample10Channel3}
\end{figure}

Both of our reconstructions demonstrate notably better agreement with observations compared to RTTOV, particularly in preserving fine spatial details. The RTTOV clear-sky image exhibits significantly less sharp details and higher-frequency features than our prediction, a finding that is reflected in the quantitative results from Table~\ref{tbl:lossesOverview}, which shows a substantial difference in RMSE between our approach and the RTTOV model.
Examining the all-sky images in Fig.~\ref{fig:compSample15Channel5} and ~\ref{fig:compSample15Channel3}, more carefully, we observe that while the RTTOV image appears superficially similar to the observation, fine details, particularly in the top-right and left corners, show noticeably larger deviations from the observation compared to our approach. This observation is further supported by the difference plots presented in Fig.~\ref{fig:diffSample15Channel5} and ~\ref{fig:diffSample15Channel3}, which display the pixel-wise differences between the observation (top right) and RTTOV (left column) versus our incremental approach (middle column) and full surrogate (bottom right). The RTTOV difference images exhibit highly structured patterns that closely resemble features from the original images in Fig.~\ref{fig:diffSample15Channel5} and ~\ref{fig:diffSample15Channel3}, suggesting systematic biases in the RTTOV reconstruction. In contrast, the difference images from our approach appear predominantly noise-like with minimal structural patterns. This noise-like characteristic of the residuals indicates that our model successfully captures the systematic atmospheric features, leaving primarily random errors rather than systematic biases. Hence, this result is consistent with our hypothesis from Section~\ref{sec:methodology} that the remaining error in the decomposition in Eq.~\ref{eq:errorDecomp3} largely reflects random sensor measurement noise. A quantitative statistical validation of this hypothesis is left for future work.

\subsection{Quantitative Evaluation}
In the following subsection we quantitatively evaluate and compare the two learned operators and the baseline RTTOV operator by computing the RMSE against the reference satellite observations.

\subsubsection{Training and Validation Losses and Visual Inspection}
To determine if our incremental, hybrid correction operator \HMLinc is able to keep up with the performance of the original RTTOV model \HRTTOV, we compared our learned residuals with the residuals between RTTOV and the observed reflectance values (\texttt{obs\_rad}) using RMSE and MSE.
For training, we used either RMSE or MSE as the loss function, with the remaining metric to monitor training and validation performance. We observed no significant differences in model performance or convergence behavior between the two choices for the loss function.
We evaluated multiple U-Net variants with different architectural complexities, ranging from three to five encoder-decoder blocks with corresponding parameter counts from approximately 40M to 150M. Despite this range in model complexity, we observed no significant performance improvements or enhanced convergence with deeper or wider architectures. Consequently, we selected the 4-block encoder-decoder U-Net configuration with 78M parameters for our final implementation.
Furthermore, for the expanding-path we experimented with using fixed upsampling or transposed convolution. We observed less artifacts using transposed convolutions versus fixed upsampling, both followed by a convolution block.

\subsubsection{Comparison of the fully data-driven surrogate model \HMLfull against numerical RTTOV \HRTTOV}

We quantitatively evaluate the performance of our full surrogate model \HMLfull against the numerical RTTOV model \HRTTOV through comprehensive error metrics, as shown in Table~\ref{tbl:fullSurrogateLossesOverview}. 
Therefore, we calculate the error metrics of the full surrogate against the observed reflectance values as provided by the SEVIRI data directly without any reconstructions that involve RTTOV outputs. 

Table ~\ref{tbl:fullSurrogateLossesOverview} shows the comparison of the predictive performance of \HMLfull versus  \HRTTOV by means of the average RMSE against the observed reflectance target values from the testing dataset. As can be seen, the average RMSE between the full surrogate models predictions to the observed targets is smaller than from the numerical RTTOV model. Note, that the full surrogate does not produce separate predictions for clear-sky and all-sky like the RTTOV. Hence, the average RMSE-values in the table are the same for clear-sky and all-sky for the full surrogate. Our fully data-driven surrogate outperforms the RTTOV model in both domains (raw and grid) and operation modes (clear-sky and all-sky). 

\begin{table}[!ht]
    \centering
    \caption{Mean and standard deviation of RMSE comparing \HMLfull and \HRTTOV.}
    \begin{tabular}{l
		S[separate-uncertainty, uncertainty-separator={}, table-format=2.2(4)]
		S[separate-uncertainty, uncertainty-separator={}, table-format=2.2(4)]
		S[separate-uncertainty, uncertainty-separator={}, table-format=2.2(5)]}
	\toprule
	& {\textbf{Avg. RMSE(\HMLfull)}}
	& {\textbf{Avg. RMSE(\HRTTOV)}}
	& {\textbf{Avg. RMSE(\HMLfull)/RMSE(\HRTTOV)}} \\
	\midrule
	Clearsky (grid) & 8.30  \pm 1.50 & 25.46 \pm 6.52  & 33.68 \pm 6.26\ \%  \\
	Clearsky (raw)  & 8.59  \pm 1.61  & 26.01 \pm 6.73  & 34.08 \pm 6.18\ \%  \\
	Allsky (grid)   & 8.30  \pm 1.50  & 8.82  \pm 0.96  & 94.23 \pm 13.91\ \% \\
	Allsky (raw)    & 8.59  \pm 1.61  & 9.27  \pm 1.01  & 92.66 \pm 13.04\ \% \\
	\bottomrule
\end{tabular}
    \label{tbl:fullSurrogateLossesOverview}
\end{table}

\subsubsection{Comparison of the hybrid correction operator \HMLinc against numerical RTTOV \HRTTOV}
We quantitatively evaluate the performance of our machine learning operator \HMLinc against the numerical RTTOV operator \HRTTOV through comprehensive error metrics presented in Fig.~\ref{fig:rttovVsMl} and Table~\ref{tbl:lossesOverview}. Detailed sample-wise loss values are provided in the appendix (Tables~\ref{tbl:allskyLossesRaw}--\ref{tbl:clearskyLossesGrid}).

Fig.~\ref{fig:rttovVsMl} (top row) displays the RMSE comparison between both models across our test dataset in the grid domain. The smoothed loss curves (blue and orange dashed lines) and overall mean RMSE (red dashed line) demonstrate that \HMLinc consistently achieves lower reconstruction errors than \HRTTOV when predicting satellite observations (\texttt{obs\_rad}). This performance advantage is quantified in Table
\ref{tbl:lossesOverview}, which shows substantial improvements across all evaluated metrics.
To ensure a comprehensive evaluation, we implemented a reverse-mapping method (described in Subsection~\ref{sec:grid2raw}) that transforms our gridded predictions back to the original raw satellite data format through interpolation. This step is crucial because our gridding process, which creates tensors of shape \texttt{[batch\_size, 175, 214, 2, 10]}, retains only approximately \qty{40}{\percent} of the original satellite observations (\num{37450} gridded points versus \num{96326} raw observations). This reverse-mapping enables direct comparison between our data-driven approach and RTTOV in the native satellite data space.

Table~\ref{tbl:lossesOverview} and the bottom row of Fig.~\ref{fig:rttovVsMl}  present the comparison in the raw data domain, confirming that \HMLinc significantly outperforms \HRTTOV in both grid and raw representations.

\begin{table}[!ht]
    \centering
    \caption{Mean and standard deviation of RMSE comparing the residual-learning model \HMLinc and \HRTTOV.}
    \begin{tabular}{l
		S[separate-uncertainty, uncertainty-separator={}, table-format=2.2(4)]
		S[separate-uncertainty, uncertainty-separator={}, table-format=2.2(4)]
		S[separate-uncertainty, uncertainty-separator={}, table-format=2.2(5)]}
	\toprule
	& {\textbf{Avg. RMSE(\HMLinc)}}
	& {\textbf{Avg. RMSE(\HRTTOV)}}
	& {\textbf{Avg. RMSE(\HMLinc)/RMSE(\HRTTOV)}} \\
	\midrule
	Clearsky (grid) & 3.70 \pm 1.38  & 25.46 \pm 6.52  & 15.31 \pm 6.89\ \%  \\
	Clearsky (raw)  & 3.99 \pm 1.37  & 26.01 \pm 6.73  & 16.19 \pm 6.87\ \%  \\
	Allsky (grid)   & 4.52 \pm 1.30  & 8.82  \pm 0.96  & 51.41 \pm 14.96\ \% \\
	Allsky (raw)    & 4.99 \pm 1.28  & 9.27  \pm 1.01  & 54.09 \pm 13.97\ \% \\
	\bottomrule
\end{tabular}
    \label{tbl:lossesOverview}
\end{table}

\begin{figure}[!ht]
    \centering
    \includegraphics[width=\linewidth]{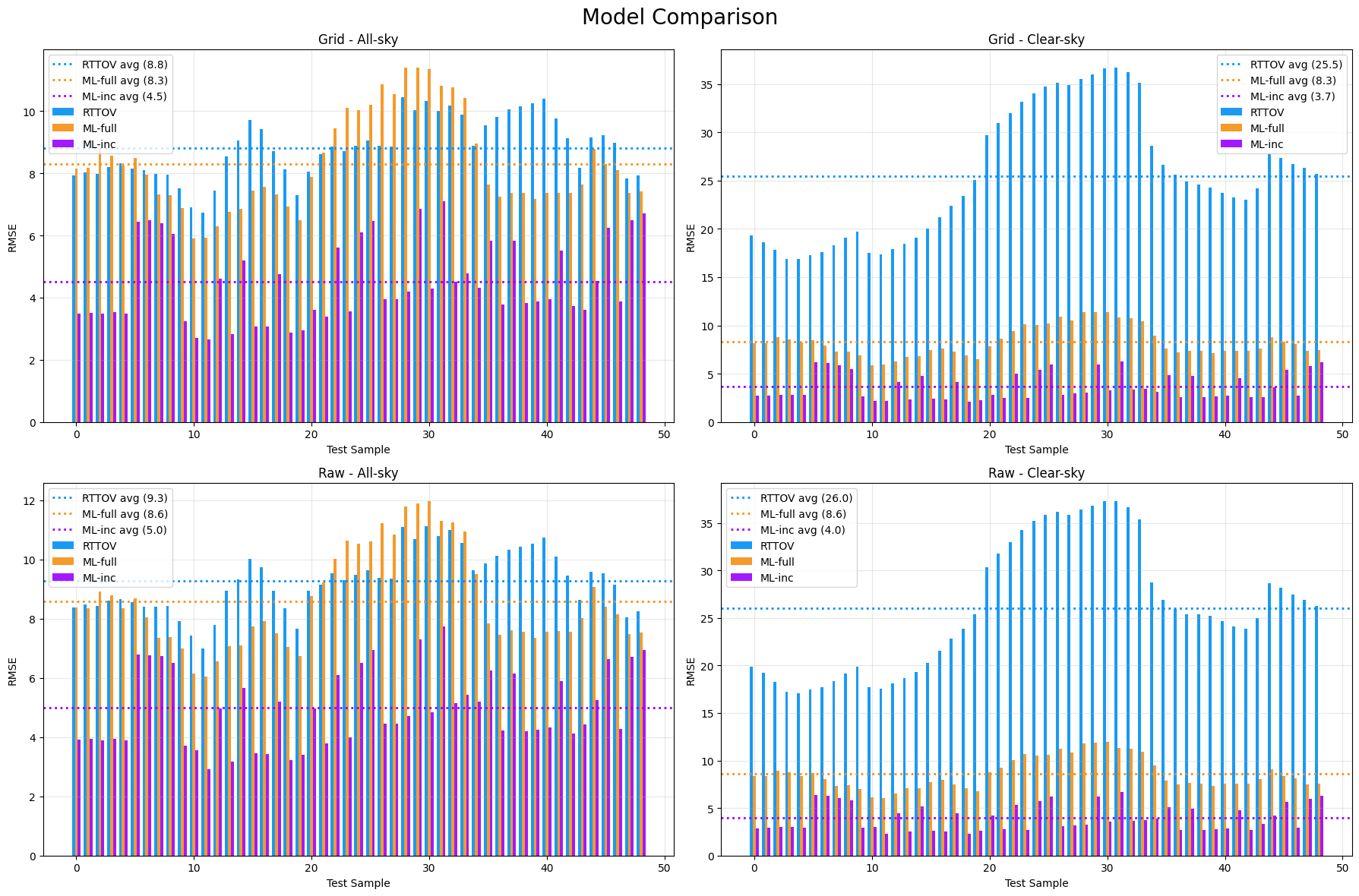}
    \caption{Comparison of the ML models \HMLfull and \HMLinc against \HRTTOV for estimating observed reflectance (\texttt{obs\_rad}) in grid and raw data domains.}		
    \label{fig:rttovVsMl}
\end{figure}

Finally, Fig.~\ref{fig:sample10Channel3} clearly underlines the superiority of the incremental surrogate model compared to RTTOV and the full surrogate. The difference images for the incremental approach (middle column) exhibit the lowest deviation values, with errors that lack the structured atmospheric patterns evident in the RTTOV and full emulator predictions. This indicates that the incremental model successfully captures the systematic biases present in the physical radiative transfer model while maintaining predictive accuracy across both clear-sky and all-sky conditions. In contrast, the full surrogate, despite being trained on the same target observations, yet without residuals, retains significant structured errors comparable to those of RTTOV itself, suggesting that direct learning of the full radiative transfer mapping is more challenging than learning the residual corrections. These results support our hypothesis that learning systematic errors, from which we can reconstruct improved predictions using existing RTTOV outputs, is more effective than directly learning target observations, as described in Subsection~\ref{subsec:datadrivenObsOpSurrogate}.

\subsection{Computational Requirements and Performance Constraints}
While the training process of neural networks generally is resource and time consuming, they can be noteworthy fast during inference. 
Our 78M parameter model was trained on a single Nvidia A100 GPU with \qty{80}{\giga \byte} of GPU memory, which was more than enough considering, that we mostly trained with smaller batch sizes of $4$ or $8$. The training time highly depends on the batch size and epochs. From our experience we achieved the best results with small batch sizes, small learning rates and long training times of around $1000$ epochs. Therefore, the average training run took us around \qtyrange[range-units = single]{15}{25}{hours}.
However, once trained the U-Net can be easily inferenced on a standard consumer GPU. We tested the prediction time per sample on a normal GPU-workstation with $16$ CPU-cores, \qty{64}{\giga \byte} DDR5-RAM and an Nvidia RTX 2080 Ti. Each prediction resembles the output of the RTTOV model for \qty{1}{\hour}. The average prediction time of our 50 test-samples was \qty{65}{\milli\second}, requiring only around \qty{3.5}{\giga \byte} of GPU memory. Predicting and reconstructing \texttt{obs\_rad} from the predicted residuals and RTTOV outputs therefore introduces no significant computational or temporal overhead.

\section{Summary and Future Work}
\label{sec:futureWork}
We developed and evaluated two machine-learning surrogate observation operators for assimilating SEVIRI satellite radiances in NWP: a fully data-driven emulator \HMLfull that maps ICON state variables directly to observed radiances, and a hybrid incremental correction model \HMLinc that learns state-dependent residuals relative to RTTOV while retaining the established physics of the baseline forward model. Both approaches exploit the 3D structure of the problem by encoding 14 atmospheric variables over 68 vertical levels with 3D convolutions and then learning horizontal spatial correlations and inter-channel dependencies on a regularized lat–lon grid using a 2D U-Net, with an additional Grid2Raw remapping to assess performance in the original observation geometry. On three weeks of Meteosat-10 data over central Europe (10 channels), the residual-learning formulation is theoretically justified under squared loss as targeting the conditional bias of RTTOV and empirically yields the most accurate reconstructions: \HMLinc produces lower RMSE than RTTOV (clear-sky and all-sky) and also outperforms the direct emulator \HMLfull, with residual error patterns that are largely noise-like rather than structurally biased. Inference is fast (order of \qtyrange{10}{100}{\milli\second} per \qty{1}{\hour} sample on a single consumer GPU), indicating that the hybrid surrogate can improve radiance accuracy with only modest computational overhead, making it a promising candidate for integration into operational data assimilation workflows.

For the current proof-of-concept research study, the dataset covers central Europe over \qty{500}{\hour}-period for approximately three weeks. Further work is needed towards a consolidated and ready-for-operations system. To comprehensively evaluate and enhance the performance of our proposed model, training and testing will be conducted on more comprehensive data, e.g., spanning one complete annual cycle and global coverage. This expanded scope will necessitate either scaling up the model architecture or implementing a mixture-of-experts (MoE) approach, where specialized models are trained for different seasons, geographic regions or day-/night-time.
In addition, we will integrate the operators \HMLfull or \HMLinc into data assimilation and evaluate its performance in the complete data assimilation cycle.

\section{Acknowledgements}
The authors would like to thank Matthias Mages for preparing and providing the
training data set.

The work of Gian Luca Buono was partially supported by the ‘‘Mittelbau’’ Funding Program of the Frankfurt University of Applied Sciences.

The authors gratefully acknowledge computing infrastructure funded by the German Federal Ministry of Education and Research (BMBF) under the program ``Forschung an Fachhochschulen, KI-NACHWUCHS@FH 1--2021'', FKZ \mbox{13FH027KI1}.

\newpage
\section{Appendix}

\subsection{Detailed Test Results}

\begin{table}[!ht]
	\centering
	\begin{minipage}[t]{0.48\textwidth}
		\centering
		\scalebox{0.7}{
			\begin{tabular}{l 
S[table-format=2.2, round-mode=places, round-precision=2] 
S[table-format=2.2, round-mode=places, round-precision=2] 
S[table-format=2.2, round-mode=places, round-precision=2]}
\toprule
 & \textbf{{RMSE(ML)}} & \textbf{{RMSE(RTTOV)}} & \textbf{{ML/RTTOV [\%]}} \\
\midrule
0 & 3.921348 & 8.368623 & 46.857741 \\
1 & 3.934717 & 8.482639 & 46.385524 \\
2 & 3.881948 & 8.439108 & 45.999501 \\
3 & 3.942226 & 8.606335 & 45.806099 \\
4 & 3.882009 & 8.655313 & 44.851163 \\
5 & 6.780242 & 8.550002 & 79.301062 \\
6 & 6.765234 & 8.417478 & 80.371274 \\
7 & 6.751748 & 8.399593 & 80.381845 \\
8 & 6.508880 & 8.422389 & 77.280682 \\
9 & 3.712123 & 7.915528 & 46.896723 \\
10 & 3.563308 & 7.442647 & 47.876885 \\
11 & 2.920242 & 6.991107 & 41.770810 \\
12 & 4.980728 & 7.792903 & 63.913639 \\
13 & 3.184726 & 8.945734 & 35.600500 \\
14 & 5.664086 & 9.338284 & 60.654458 \\
15 & 3.467156 & 10.025355 & 34.583873 \\
16 & 3.432566 & 9.752615 & 35.196370 \\
17 & 5.192183 & 8.956417 & 57.971658 \\
18 & 3.231350 & 8.360008 & 38.652470 \\
19 & 3.415752 & 7.666590 & 44.553727 \\
20 & 4.976336 & 8.948264 & 55.612306 \\
21 & 3.800889 & 9.154444 & 41.519609 \\
22 & 6.101266 & 9.528347 & 64.032789 \\
23 & 4.001647 & 9.304893 & 43.005838 \\
24 & 6.516036 & 9.472341 & 68.790132 \\
25 & 6.937171 & 9.642916 & 71.940600 \\
26 & 4.450380 & 9.393949 & 47.374970 \\
27 & 4.459311 & 9.359281 & 47.645869 \\
28 & 4.707335 & 11.111302 & 42.365281 \\
29 & 7.304235 & 10.685128 & 68.358885 \\
30 & 4.846346 & 11.133723 & 43.528533 \\
31 & 7.729683 & 10.784902 & 71.671339 \\
32 & 5.138890 & 11.001439 & 46.711073 \\
33 & 5.421949 & 10.566756 & 51.311384 \\
34 & 5.210536 & 9.631210 & 54.100527 \\
35 & 6.261888 & 9.861984 & 63.495209 \\
36 & 4.216835 & 10.115948 & 41.685022 \\
37 & 6.157926 & 10.340358 & 59.552346 \\
38 & 4.198570 & 10.431516 & 40.248895 \\
39 & 4.250365 & 10.544579 & 40.308532 \\
40 & 4.318197 & 10.738234 & 40.213288 \\
41 & 5.884327 & 10.111309 & 58.195501 \\
42 & 4.125679 & 9.453641 & 43.641170 \\
43 & 4.429915 & 8.641105 & 51.265615 \\
44 & 5.247677 & 9.583621 & 54.756723 \\
45 & 6.647925 & 9.529996 & 69.757899 \\
46 & 4.277202 & 9.152203 & 46.734129 \\
47 & 6.715635 & 8.037416 & 83.554661 \\
48 & 6.946316 & 8.262883 & 84.066495 \\
\toprule
\textbf{AVG} & \textbf{4.99} & \textbf{9.27} & \textbf{54.09} \\
\textbf{STD} & \textbf{1.28} & \textbf{1.012} & \textbf{13.97} \\
\bottomrule
\end{tabular}

		}
		\vspace{0.5em}
		\caption{Mean absolute errors for RTTOV \texttt{all-sky} and our reconstructed observed radiances versus ground truth targets (\texttt{obs\_rad}) in raw data format.}
		\label{tbl:allskyLossesRaw}
	\end{minipage}
	\hfill
	\begin{minipage}[t]{0.48\textwidth}
		\centering
		\scalebox{0.7}{
			\begin{tabular}{l 
S[table-format=2.2, round-mode=places, round-precision=2] 
S[table-format=2.2, round-mode=places, round-precision=2] 
S[table-format=2.2, round-mode=places, round-precision=2]}
\toprule
 & \textbf{{RMSE(ML)}} & \textbf{{RMSE(RTTOV)}} & \textbf{{ML/RTTOV [\%]}} \\
\midrule
0 & 2.884744 & 19.884565 & 14.507455 \\
1 & 2.970073 & 19.215059 & 15.457008 \\
2 & 3.004642 & 18.293678 & 16.424480 \\
3 & 3.009206 & 17.195377 & 17.500085 \\
4 & 2.968961 & 17.101734 & 17.360582 \\
5 & 6.336011 & 17.474297 & 36.259036 \\
6 & 6.256692 & 17.724876 & 35.298931 \\
7 & 6.074660 & 18.344368 & 33.114577 \\
8 & 5.798849 & 19.176458 & 30.239416 \\
9 & 2.934908 & 19.865141 & 14.774163 \\
10 & 3.014849 & 17.731361 & 17.002920 \\
11 & 2.295997 & 17.516531 & 13.107603 \\
12 & 4.428965 & 18.090887 & 24.481744 \\
13 & 2.545656 & 18.655031 & 13.645951 \\
14 & 5.152756 & 19.279621 & 26.726437 \\
15 & 2.611323 & 20.241711 & 12.900701 \\
16 & 2.569005 & 21.573692 & 11.908043 \\
17 & 4.418277 & 22.864285 & 19.323925 \\
18 & 2.290700 & 23.898760 & 9.585018 \\
19 & 2.593123 & 25.392820 & 10.212032 \\
20 & 4.247819 & 30.330402 & 14.005152 \\
21 & 2.742141 & 31.737505 & 8.640064 \\
22 & 5.335839 & 32.966290 & 16.185744 \\
23 & 2.709124 & 34.258705 & 7.907842 \\
24 & 5.706462 & 35.173702 & 16.223661 \\
25 & 6.225390 & 35.833237 & 17.373229 \\
26 & 3.077540 & 36.164845 & 8.509755 \\
27 & 3.159155 & 35.869785 & 8.807287 \\
28 & 3.281357 & 36.417767 & 9.010318 \\
29 & 6.208377 & 36.785130 & 16.877409 \\
30 & 3.573019 & 37.309540 & 9.576691 \\
31 & 6.671118 & 37.293736 & 17.888040 \\
32 & 3.655398 & 36.677650 & 9.966281 \\
33 & 3.762561 & 35.376389 & 10.635798 \\
34 & 3.877836 & 28.755007 & 13.485776 \\
35 & 5.093611 & 26.861177 & 18.962725 \\
36 & 2.717621 & 25.927284 & 10.481702 \\
37 & 4.950898 & 25.420675 & 19.475870 \\
38 & 2.674927 & 25.399300 & 10.531500 \\
39 & 2.795217 & 25.213139 & 11.086351 \\
40 & 2.880492 & 24.662951 & 11.679432 \\
41 & 4.771920 & 24.105291 & 19.796152 \\
42 & 2.702681 & 23.871284 & 11.321892 \\
43 & 3.352132 & 24.984854 & 13.416655 \\
44 & 4.202174 & 28.618767 & 14.683282 \\
45 & 5.675609 & 28.195572 & 20.129434 \\
46 & 2.906009 & 27.451435 & 10.586001 \\
47 & 5.972670 & 26.925385 & 22.182302 \\
48 & 6.291343 & 26.227827 & 23.987283 \\
\toprule
\textbf{AVG} & \textbf{3.99} & \textbf{26.01} & \textbf{16.19} \\
\textbf{STD} & \textbf{1.37} & \textbf{6.73} & \textbf{6.87} \\
\bottomrule
\end{tabular}

		}
		\vspace{0.5em}
		\caption{Mean absolute errors for RTTOV \texttt{clear-sky} and our reconstructed observed radiances versus ground truth targets (\texttt{obs\_rad}) in raw data format.}
		\label{tbl:clearskyLossesRaw}
	\end{minipage}
\end{table}

\newpage
\begin{table}[!ht]
	\centering
	\begin{minipage}[t]{0.48\textwidth}
		\centering
		\scalebox{0.8}{
			\begin{tabular}{l 
S[table-format=2.2, round-mode=places, round-precision=2] 
S[table-format=2.2, round-mode=places, round-precision=2] 
S[table-format=2.2, round-mode=places, round-precision=2]}
\toprule
 & \textbf{{RMSE(ML)}} & \textbf{{RMSE(RTTOV)}} & \textbf{{ML/RTTOV [\%]}} \\
\midrule
0 & 3.489170 & 7.921046 & 44.049353 \\
1 & 3.503776 & 8.020137 & 43.687229 \\
2 & 3.483400 & 7.992135 & 43.585344 \\
3 & 3.524593 & 8.192903 & 43.020071 \\
4 & 3.474501 & 8.333333 & 41.694007 \\
5 & 6.442409 & 8.164782 & 78.904850 \\
6 & 6.487770 & 8.109888 & 79.998267 \\
7 & 6.398919 & 7.978305 & 80.203995 \\
8 & 6.053373 & 7.961966 & 76.028622 \\
9 & 3.240528 & 7.522496 & 43.077834 \\
10 & 2.712847 & 6.899773 & 39.317924 \\
11 & 2.661493 & 6.732858 & 39.529920 \\
12 & 4.604964 & 7.440346 & 61.891799 \\
13 & 2.839228 & 8.547999 & 33.215121 \\
14 & 5.189870 & 9.068316 & 57.230805 \\
15 & 3.062625 & 9.716460 & 31.519965 \\
16 & 3.066852 & 9.413484 & 32.579353 \\
17 & 4.754408 & 8.713689 & 54.562515 \\
18 & 2.875252 & 8.125205 & 35.386820 \\
19 & 2.947012 & 7.309178 & 40.319337 \\
20 & 3.599809 & 8.049144 & 44.722876 \\
21 & 3.377066 & 8.608930 & 39.227479 \\
22 & 5.621278 & 8.871445 & 63.363725 \\
23 & 3.550762 & 8.704511 & 40.792207 \\
24 & 6.107119 & 8.888284 & 68.709762 \\
25 & 6.476087 & 9.055721 & 71.513767 \\
26 & 3.942340 & 8.877283 & 44.409305 \\
27 & 3.948040 & 8.852448 & 44.598279 \\
28 & 4.200110 & 10.448020 & 40.200054 \\
29 & 6.848497 & 10.039160 & 68.217826 \\
30 & 4.302286 & 10.322312 & 41.679482 \\
31 & 7.109958 & 10.015122 & 70.992226 \\
32 & 4.523867 & 10.178069 & 44.447202 \\
33 & 4.773357 & 9.881929 & 48.303898 \\
34 & 4.310274 & 8.887510 & 48.498110 \\
35 & 5.829492 & 9.536246 & 61.129844 \\
36 & 3.775347 & 9.816466 & 38.459327 \\
37 & 5.820423 & 10.055048 & 57.885581 \\
38 & 3.837476 & 10.160182 & 37.769754 \\
39 & 3.883986 & 10.257697 & 37.864116 \\
40 & 3.949937 & 10.401467 & 37.974808 \\
41 & 5.505652 & 9.775918 & 56.318513 \\
42 & 3.723112 & 9.140941 & 40.730071 \\
43 & 3.613492 & 8.179025 & 44.179980 \\
44 & 4.536908 & 9.155711 & 49.552768 \\
45 & 6.237505 & 9.233944 & 67.549740 \\
46 & 3.866080 & 8.984185 & 43.032064 \\
47 & 6.484752 & 7.842349 & 82.688898 \\
48 & 6.704850 & 7.934988 & 84.497298 \\
\toprule
\textbf{AVG} & \textbf{4.52} & \textbf{8.82} & \textbf{51.41}\\
\textbf{STD} & \textbf{1.30} & \textbf{0.96} & \textbf{14.96} \\

\bottomrule
\end{tabular}

		}
		\vspace{0.5em}
		\caption{Mean absolute errors for RTTOV \texttt{all-sky} and our reconstructed observed radiances versus ground truth targets (\texttt{obs\_rad}) in grid format.}
		\label{tbl:allskyLossesGrid}
	\end{minipage}
	\hfill
	\begin{minipage}[t]{0.48\textwidth}
		\centering
		\scalebox{0.8}{
			\begin{tabular}{l 
S[table-format=2.2, round-mode=places, round-precision=2] 
S[table-format=2.2, round-mode=places, round-precision=2] 
S[table-format=2.2, round-mode=places, round-precision=2]}
\toprule
 & \textbf{{RMSE(ML)}} & \textbf{{RMSE(RTTOV)}} & \textbf{{ML/RTTOV [\%]}} \\
\midrule
0 & 2.697225 & 19.294400 & 13.979315 \\
1 & 2.769185 & 18.637083 & 14.858467 \\
2 & 2.811202 & 17.800218 & 15.793074 \\
3 & 2.824024 & 16.875980 & 16.733985 \\
4 & 2.779194 & 16.874862 & 16.469429 \\
5 & 6.158066 & 17.283352 & 35.630044 \\
6 & 6.129459 & 17.624349 & 34.778358 \\
7 & 5.874402 & 18.323112 & 32.060068 \\
8 & 5.510039 & 19.095088 & 28.855792 \\
9 & 2.620960 & 19.682953 & 13.315890 \\
10 & 2.154471 & 17.496436 & 12.313772 \\
11 & 2.148781 & 17.335209 & 12.395471 \\
12 & 4.138226 & 17.922139 & 23.090024 \\
13 & 2.317481 & 18.474856 & 12.543975 \\
14 & 4.758239 & 19.086969 & 24.929256 \\
15 & 2.395746 & 20.011839 & 11.971644 \\
16 & 2.374350 & 21.233958 & 11.181854 \\
17 & 4.130107 & 22.414391 & 18.426139 \\
18 & 2.131734 & 23.408196 & 9.106784 \\
19 & 2.253464 & 25.031012 & 9.002689 \\
20 & 2.814201 & 29.696391 & 9.476576 \\
21 & 2.504879 & 30.940991 & 8.095666 \\
22 & 4.984340 & 31.995500 & 15.578252 \\
23 & 2.496278 & 33.148007 & 7.530703 \\
24 & 5.440426 & 34.010569 & 15.996280 \\
25 & 5.946480 & 34.702860 & 17.135419 \\
26 & 2.847459 & 35.132609 & 8.104889 \\
27 & 2.931174 & 34.894820 & 8.400027 \\
28 & 3.017256 & 35.493290 & 8.500919 \\
29 & 5.931662 & 35.977147 & 16.487305 \\
30 & 3.299906 & 36.623004 & 9.010475 \\
31 & 6.269154 & 36.723762 & 17.071110 \\
32 & 3.353099 & 36.255512 & 9.248521 \\
33 & 3.428089 & 35.125661 & 9.759500 \\
34 & 3.114027 & 28.603351 & 10.886931 \\
35 & 4.892410 & 26.642409 & 18.363242 \\
36 & 2.606369 & 25.586525 & 10.186492 \\
37 & 4.805959 & 24.895064 & 19.304867 \\
38 & 2.587599 & 24.623717 & 10.508562 \\
39 & 2.685709 & 24.300006 & 11.052299 \\
40 & 2.752954 & 23.746603 & 11.593042 \\
41 & 4.556905 & 23.232404 & 19.614434 \\
42 & 2.546516 & 23.045552 & 11.049925 \\
43 & 2.573136 & 24.197170 & 10.634038 \\
44 & 3.613062 & 27.767432 & 13.011868 \\
45 & 5.413266 & 27.349832 & 19.792685 \\
46 & 2.739482 & 26.712840 & 10.255299 \\
47 & 5.831012 & 26.287208 & 22.181937 \\
48 & 6.152999 & 25.702339 & 23.939451 \\
\toprule
\textbf{AVG} & \textbf{3.70} & \textbf{25.46} & \textbf{15.31} \\
\textbf{STD} & \textbf{1.38} & \textbf{6.52} & \textbf{6.89} \\
\bottomrule
\end{tabular}

		}
		\vspace{0.5em}
		\caption{Mean absolute errors for RTTOV \texttt{clear-sky} and our reconstructed observed radiances versus ground truth targets (\texttt{obs\_rad}) in grid format.}
		\label{tbl:clearskyLossesGrid}
	\end{minipage}
\end{table}

\newpage
\subsection{Channel-wise output comparison}
\begin{figure}[!ht]
	\centering
	\includegraphics[height=0.91\textheight]{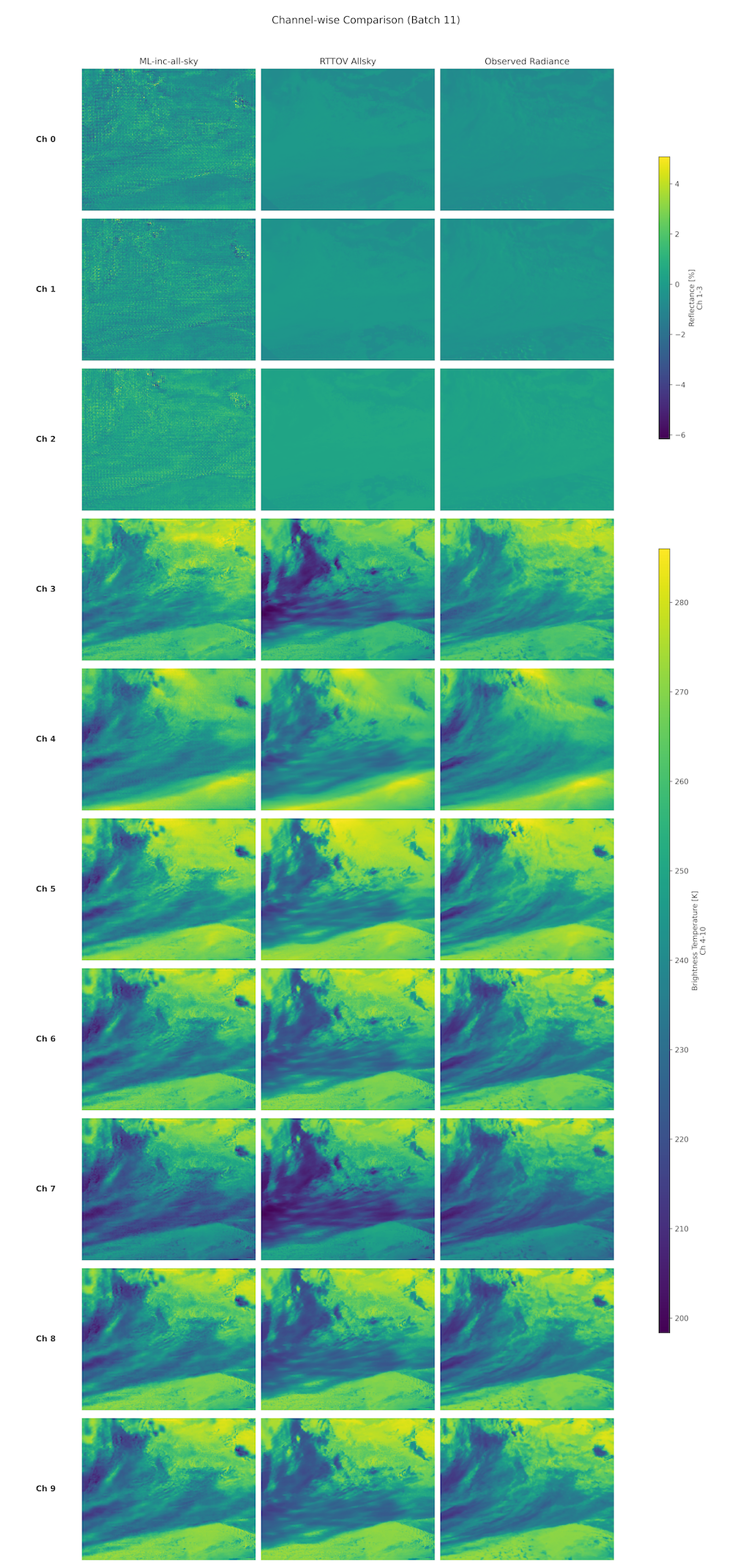}
	\caption{Channel-wise comparison of \HMLinc (left), \HRTTOV (middle), and satellite observations (right) for all 10 channels on test sample 11.}
	\label{fig:channelOverview1}
\end{figure}

\begin{figure}[!ht]
	\centering
	\includegraphics[height=0.91\textheight]{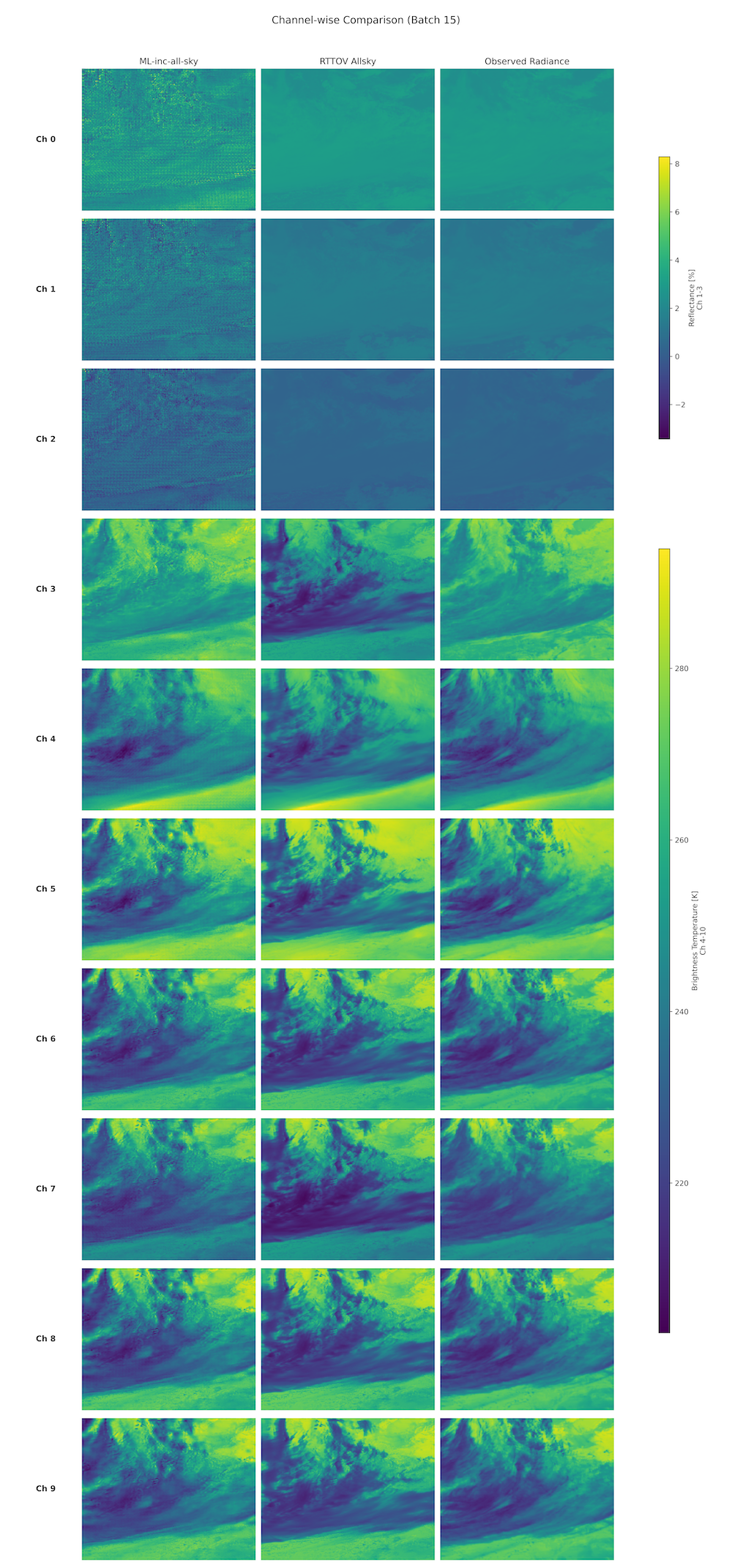}
	\caption{Channel-wise comparison of \HMLinc (left), \HRTTOV (middle), and satellite observations (right) for all 10 channels on test sample 15.}
	\label{fig:channelOverview2}
\end{figure}

\begin{figure}[!ht]
	\centering
	\includegraphics[height=0.91\textheight]{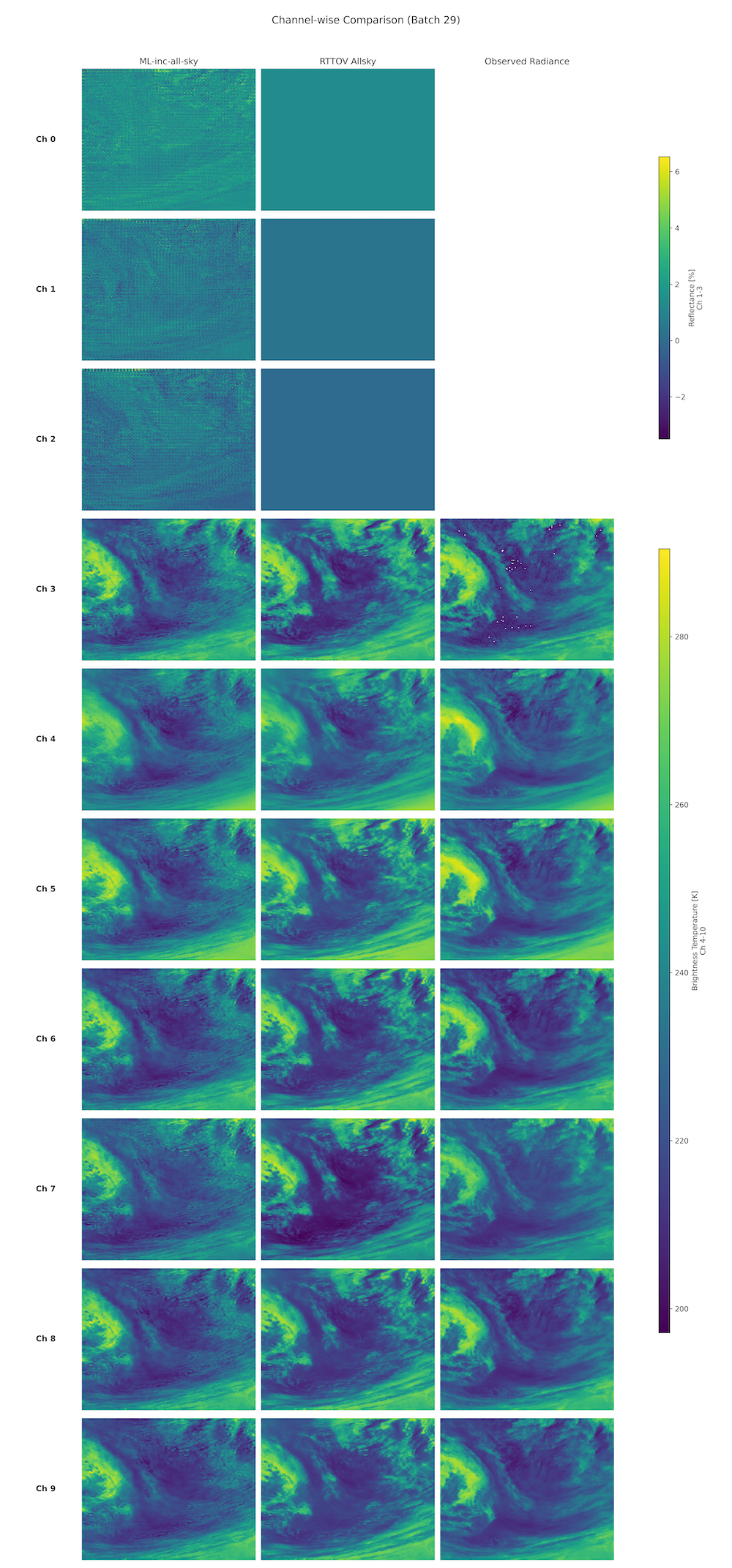}
	\caption{Channel-wise comparison of \HMLinc (left), \HRTTOV (middle), and satellite observations (right) for all 10 channels on test sample 29.}
	\label{fig:channelOverview3}
\end{figure}

\clearpage
\bibliographystyle{alpha}
\bibliography{References}

@article{bauer2015quiet,
	author = {Bauer, Peter and Thorpe, Alan and Brunet, Gilbert},
	doi = {10.1038/nature14956},
	journal = {Nature},
	pages = {47--55},
	title = {The Quiet Revolution of Numerical Weather Prediction},
	url = {https://doi.org/10.1038/nature14956},
	volume = {525},
	year = {2015}}

@book{rodgers2000inverse,
	author = {Rodgers, Clive D.},
	doi = {10.1142/3171},
	publisher = {World Scientific},
	title = {Inverse Methods for Atmospheric Sounding},
	year = {2000}}

@article{saunders2018update,
	author = {Saunders, Roger and Hocking, James and Turner, Emma and Rayer, Peter and Rundle, David and Brunel, Pascal and Vidot, Jerome and Roquet, Pascale and Matricardi, Marco and Geer, Alan and Bormann, Niels and Lupu, Cristina},
	doi = {10.5194/gmd-11-2717-2018},
	journal = {Geoscientific Model Development},
	number = {7},
	pages = {2717--2737},
	publisher = {Copernicus Publications G{\"o}ttingen, Germany},
	title = {An update on the RTTOV fast radiative transfer model (currently at version~12)},
	url = {https://gmd.copernicus.org/articles/11/2717/2018/},
	volume = {11},
	year = {2018}}

@article{bauer2010allsky,
	author = {Bauer, Peter and Geer, Alan and Lopez, Philippe},
	doi = {10.1002/qj.195},
	journal = {Quarterly Journal of the Royal Meteorological Society},
	number = {649},
	pages = {124--139},
	title = {All-sky microwave assimilation and its impact on operational numerical weather prediction},
	volume = {136},
	year = {2010}}

@article{geer2018all,
	author = {Geer, Alan J. and Lonitz, Katrin and Weston, Peter and Kazumori, Masahiro and Okamoto, Kozo and Zhu, Yanqiu and Liu, Emily Huichun and Collard, Andrew and Bell, William and Migliorini, Stefano and Chambon, Philippe and Fourri{\'e}, Nadia and Kim, Min-Jeong and K{\"o}pken-Watts, Christina and Schraff, Christoph},
	doi = {10.1002/qj.3202},
	eprint = {https://rmets.onlinelibrary.wiley.com/doi/pdf/10.1002/qj.3202},
	journal = {Quarterly Journal of the Royal Meteorological Society},
	number = {713},
	pages = {1191-1217},
	title = {All-sky satellite data assimilation at operational weather forecasting centres},
	url = {https://rmets.onlinelibrary.wiley.com/doi/abs/10.1002/qj.3202},
	volume = {144},
	year = {2018}}

@article{krasnopolsky2010nnradiation,
	author = {Krasnopolsky, Vladimir M. and Fox-Rabinovitz, Michael S. and Hou, Y. T. and Lord, S. J. and Belochitski, A. A.},
	doi = {10.1175/2009MWR3149.1},
	journal = {Monthly Weather Review},
	number = {5},
	pages = {1822--1842},
	title = {Accurate and Fast Neural Network Emulations of Model Radiation for the NCEP Coupled Climate Forecast System: Climate Simulations and Seasonal Predictions},
	volume = {138},
	year = {2010}}

@article{dueben2018challenges,
	author = {Dueben, Peter D and Bauer, Peter},
	doi = {10.5194/gmd-11-3999-2018},
	journal = {Geoscientific Model Development},
	number = {10},
	pages = {3999--4009},
	publisher = {Copernicus Publications G{\"o}ttingen, Germany},
	title = {Challenges and design choices for global weather and climate models based on machine learning},
	volume = {11},
	year = {2018}}

@article{chantry2021machinelearning,
	author = {Chantry, Matthew and Hatfield, Sam and Dueben, Peter and Polichtchouk, Inna and Palmer, Tim},
	doi = {10.1029/2021MS002477},
	journal = {Journal of Advances in Modeling Earth Systems},
	number = {7},
	pages = {e2021MS002477},
	title = {Machine Learning Emulation of Gravity Wave Drag in Numerical Weather Forecasting},
	url = {https://doi.org/10.1029/2021MS002477},
	volume = {13},
	year = {2021}}

@article{stewart2013data,
	author = {Stewart, Laura M and Dance, Sarah L and Nichols, Nancy K},
	doi = {10.3402/tellusa.v65i0.30587},
	journal = {Tellus A: Dynamic Meteorology and Oceanography},
	number = {1},
	pages = {19546},
	publisher = {Taylor \& Francis},
	title = {Data assimilation with correlated observation errors: experiments with a 1-D shallow water model},
	url = {https://doi.org/10.3402/tellusa.v65i0.19546},
	volume = {65},
	year = {2013}}

@article{desroziers2005diagnosis,
	author = {Desroziers, G{\'e}rald and Berre, Loic and Chapnik, Bernard and Poli, Paul},
	doi = {10.1256/qj.05.108},
	journal = {Quarterly Journal of the Royal Meteorological Society},
	pages = {3385--3396},
	publisher = {Wiley Online Library},
	title = {Diagnosis of observation, background and analysis--error statistics in observation space},
	url = {https://rmets.onlinelibrary.wiley.com/doi/10.1256/qj.05.108},
	volume = {131},
	year = {2005}}

@article{geer2019correlated,
	author = {Geer, Alan J.},
	doi = {10.5194/amt-12-3629-2019},
	journal = {Atmospheric Measurement Techniques},
	pages = {3629--3657},
	title = {Correlated observation error models for assimilating all-sky infrared radiances},
	url = {https://amt.copernicus.org/articles/12/3629/2019/},
	volume = {12},
	year = {2019}}

@article{JanjicEtAl,
	author = {Janji{\'c}, Tijana and Bormann, Niels and Bocquet, Marc and Carton, JA and Cohn, Stephen E and Dance, Sarah L and Losa, Svetlana N and Nichols, Nancy K and Potthast, Roland and Waller, Joanne A and others},
	doi = {10.1002/qj.3130},
	eprint = {https://rmets.onlinelibrary.wiley.com/doi/pdf/10.1002/qj.3130},
	journal = {Quarterly Journal of the Royal Meteorological Society},
	number = {713},
	pages = {1257-1278},
	title = {On the representation error in data assimilation},
	url = {https://rmets.onlinelibrary.wiley.com/doi/abs/10.1002/qj.3130},
	volume = {144},
	year = {2018}}

@article{Unet,
	author = {Olaf Ronneberger and Philipp Fischer and Thomas Brox},
	doi = {10.1007/978-3-319-24574-4_28},
	bibsource = {dblp computer science bibliography, https://dblp.org},
	eprint = {1505.04597},
	eprinttype = {arXiv},
	journal = {CoRR},
	title = {U-Net: Convolutional Networks for Biomedical Image Segmentation},
	url = {http://arxiv.org/abs/1505.04597},
	volume = {abs/1505.04597},
	year = {2015}}

@misc{rttov,
	address = {Shinfield Park, Reading},
	author = {Eyre, John R.},
	doi = {10.21957/xsg8d92y3},
	journal = {ECMWF Technical Memoranda},
	language = {eng},
	month = {03/1991},
	number = {176},
	pages = {28},
	publisher = {ECMWF},
	title = {A fast radiative transfer model for satellite sounding systems},
	url = {https://www.ecmwf.int/node/9329},
	year = {1991}}

@article{recurrent-residual-cnn,
	author = {Alom, Md Zahangir and Hasan, Mahmudul and Yakopcic, Chris and Taha, Tarek M and Asari, Vijayan K},
	doi={10.48550/arXiv.1802.06955},
	journal = {arXiv preprint arXiv:1802.06955},
	title = {Recurrent residual convolutional neural network based on u-net (r2u-net) for medical image segmentation},
	year = {2018}}

@inproceedings{smith2019super,
	author = {Smith, Leslie N and Topin, Nicholay},
	doi={10.1117/12.2520589},
	booktitle = {Artificial intelligence and machine learning for multi-domain operations applications},
	organization = {SPIE},
	pages = {369--386},
	title = {Super-convergence: Very fast training of neural networks using large learning rates},
	volume = {11006},
	year = {2019}}

@article{StegmannEtAl,
	author = {Patrick G. Stegmann and Benjamin Johnson and Isaac Moradi and Bryan Karpowicz and Will McCarty},
	doi = {10.1016/j.jqsrt.2022.108088},
	issn = {0022-4073},
	journal = {Journal of Quantitative Spectroscopy and Radiative Transfer},
	pages = {108088},
	title = {A deep learning approach to fast radiative transfer},
	url = {https://www.sciencedirect.com/science/article/pii/S0022407322000255},
	volume = {280},
	year = {2022}}

@article{ZhouEtAl,
	author = {Yongbo Zhou and Yubao Liu and Chao Liu},
	doi = {10.1016/j.jqsrt.2021.107891},
	issn = {0022-4073},
	journal = {Journal of Quantitative Spectroscopy and Radiative Transfer},
	pages = {107891},
	title = {A machine learning-based method to account for 3D Short-Wave radiative effects in 1D satellite observation operators},
	url = {https://www.sciencedirect.com/science/article/pii/S0022407321003836},
	volume = {275},
	year = {2021}}

@article{Scheck,
	author = {Leonhard Scheck},
	doi = {10.1016/j.jqsrt.2021.107841},
	issn = {0022-4073},
	journal = {Journal of Quantitative Spectroscopy and Radiative Transfer},
	pages = {107841},
	title = {A neural network based forward operator for visible satellite images and its adjoint},
	url = {https://www.sciencedirect.com/science/article/pii/S0022407321003344},
	volume = {274},
	year = {2021}}

@article{BaurEtAl,
	author = {Baur, Florian and Scheck, Leonhard and Stumpf, Christina and K{\"o}pken-Watts, Christina and Potthast, Roland},
	doi = {10.5194/amt-16-5305-2023},
	journal = {Atmospheric Measurement Techniques},
	number = {21},
	pages = {5305--5326},
	publisher = {Copernicus Publications G{\"o}ttingen, Germany},
	title = {A neural-network-based method for generating synthetic  \qty{1.6}{\um} near-infrared satellite images},
	volume = {16},
	year = {2023}
	}

@article{WeiserEtAl,
	author = {Weiser, Alan and Sergio E. and Zarantonello},
	doi = {10.2307/2007922},
	journal = {Mathematics of Computation},
	number = {181},
	pages = {189--196},
	title = {A Note on Piecewise Linear and Multilinear Table Interpolation in Many Dimensions},
	url = {https://doi.org/10.2307/2007922},
	volume = {50},
	year = {1988}}

@misc{jahrbuch2023,
	author = {{Deutscher Wetterdienst}},
	note = {Abgerufen am: 19.10.2025},
	title = {{Jahrbuch 2023 des Deutschen Wetterdiensts}},
	url = {https://www.dwd.de/DE/leistungen/jahresberichte_dwd/jahresberichte/2023.html?nn=511948l},
	urldate = {2025-10-19},
	year = {2023}}

@misc{rttov14svr,
	author = {{NWP SAF}},
	howpublished = {\url{https://nwp-saf.eumetsat.int/site/download/documentation/rtm/docs_rttov14/rttov14_svr.pdf}},
	note = {Accessed: 2025-10-19},
	title = {RTTOV v14 Science and Validation Report},
	url = {\url{https://nwp-saf.eumetsat.int/site/download/documentation/rtm/docs_rttov14/rttov14_svr.pdf}},
	urldate = {2025-10-19},
	year = {2025}}

@article{scheck2020,
	author = {Scheck, Leonhard and Weissmann, Martin and Bach, Liselotte},
	doi = {10.1002/qj.3840},
	eprint = {https://rmets.onlinelibrary.wiley.com/doi/pdf/10.1002/qj.3840},
	journal = {Quarterly Journal of the Royal Meteorological Society},
	number = {732},
	pages = {3165-3186},
	title = {Assimilating visible satellite images for convective-scale numerical weather prediction: A case-study},
	url = {https://rmets.onlinelibrary.wiley.com/doi/abs/10.1002/qj.3840},
	volume = {146},
	year = {2020}}

@article{CRTM2023,
	author = {Johnson, Benjamin T and Dang, Cheng and Stegmann, Patrick and Liu, Quanhua and Moradi, Isaac and Auligne, Thomas},
	doi = {10.1175/BAMS-D-22-0015.1},
	journal = {Bulletin of the American Meteorological Society},
	number = {10},
	pages = {E1817--E1830},
	publisher = {American Meteorological Society},
	title = {The Community Radiative Transfer Model (CRTM): Community-focused collaborative model development accelerating research to operations},
	url = {"https://journals.ametsoc.org/view/journals/bams/104/10/BAMS-D-22-0015.1.xml"},
	volume = {104},
	year = {2023}}

@article{eyre2020_1,
	author = {Eyre, John R. and English, Stephen J. and Forsythe, Mary},
	doi = {10.1002/qj.3654},
	eprint = {https://rmets.onlinelibrary.wiley.com/doi/pdf/10.1002/qj.3654},
	journal = {Quarterly Journal of the Royal Meteorological Society},
	number = {726},
	pages = {49-68},
	title = {Assimilation of satellite data in numerical weather prediction. Part I: The early years},
	url = {https://rmets.onlinelibrary.wiley.com/doi/abs/10.1002/qj.3654},
	volume = {146},
	year = {2020}}

@article{eyre2022_2,
	author = {Eyre, John R. and Bell, William and Cotton, James and English, Stephen J. and Forsythe, Mary and Healy, Sean B. and Pavelin, Edward G.},
	doi = {10.1002/qj.4228},
	eprint = {https://rmets.onlinelibrary.wiley.com/doi/pdf/10.1002/qj.4228},
	journal = {Quarterly Journal of the Royal Meteorological Society},
	number = {743},
	pages = {521-556},
	title = {Assimilation of satellite data in numerical weather prediction. Part II: Recent years},
	url = {https://rmets.onlinelibrary.wiley.com/doi/abs/10.1002/qj.4228},
	volume = {148},
	year = {2022}}

@article{weng2020,
	author = {Fuzhong Weng and Benjamin T. Johnson and Peng Zhang and Stephen English},
	doi = {10.1016/j.jqsrt.2020.106826},
	issn = {0022-4073},
	journal = {Journal of Quantitative Spectroscopy and Radiative Transfer},
	pages = {106826},
	title = {Preface for the special issue of radiative transfer models for satellite data assimilation},
	url = {https://www.sciencedirect.com/science/article/pii/S0022407320300017},
	volume = {244},
	year = {2020}}

@book{weng2017,
	author = {Fuzhong Weng},
	doi = {10.1002/9783527336289.fmatter},
	eprint = {https://onlinelibrary.wiley.com/doi/pdf/10.1002/9783527336289.fmatter},
	isbn = {9783527336289},
	publisher = {John Wiley \& Sons, Ltd},
	title = {Passive Microwave Remote Sensing of the Earth},
	url = {https://onlinelibrary.wiley.com/doi/abs/10.1002/9783527336289.fmatter},
	year = {2017}}

@article{HintonEtAl,
	author = {Hinton, Geoffrey E and Salakhutdinov, Ruslan R},
	description = {Reducing the dimensionality of data with neural ne...[Science. 2006] - PubMed Result},
	doi = {10.1126/science.1127647},
	journal = {Science},
	month = Jul,
	number = 5786,
	pages = {504-507},
	pmid = {16873662},
	title = {Reducing the dimensionality of data with neural networks},
	url = {http://www.ncbi.nlm.nih.gov/sites/entrez?db=pubmed&uid=16873662&cmd=showdetailview&indexed=google},
	volume = 313,
	year = 2006}

@inproceedings{MasciEtAl,
	author = {Masci, Jonathan and Meier, Ueli and Ciresan, Dan C. and Schmidhuber, J{\"u}rgen},
	doi = {10.1007/978-3-642-21735-7_7},
	booktitle = {ICANN},
	pages = {52--59},
	publisher = {Springer},
	title = {Stacked Convolutional Auto-Encoders for Hierarchical Feature Extraction},
	volume = {6791},
	year = {2011}}

@article{BadrinarayananEtAl,
	author = {Badrinarayanan, Vijay and Kendall, Alex and Cipolla, Roberto},
	doi = {10.1109/TPAMI.2016.2644615},
	journal = {IEEE Transactions on Pattern Analysis and Machine Intelligence},
	number = {12},
	pages = {2481-2495},
	title = {SegNet: A Deep Convolutional Encoder-Decoder Architecture for Image Segmentation},
	volume = {39},
	year = {2017}}

@book{Kalnay2003,
	author = {Kalnay, Eugenia},
	doi = {10.1256/00359000360683511},
	publisher = {Cambridge University Press},
	title = {Atmospheric Modeling, Data Assimilation and Predictability},
	year = {2003}}

@book{Durran2010,
	author = {Durran, Dale R.},
	doi = {10.1007/978-1-4419-6412-0},
	publisher = {Springer},
	title = {Numerical Methods for Fluid Dynamics},
	year = {2010}}

@article{Williamson2007,
	author = {Williamson, David L.},
	doi = {10.2151/jmsj.85B.241},
	journal = {Journal of the Meteorological Society of Japan},
	number = {B},
	pages = {241--269},
	title = {The evolution of dynamical cores for global atmospheric models},
	volume = {85},
	year = {2007}}

@article{LeutbecherPalmer2008,
	author = {Leutbecher, Martin and Palmer, Tim N.},
	doi = {10.21957/c0hq4yg78},
	journal = {Journal of Computational Physics},
	pages = {3515--3539},
	title = {Ensemble forecasting},
	volume = {227},
	year = {2008}}

@article{Palmer2019,
	author = {Palmer, Tim N.},
	doi = {10.1002/qj.3383},
	journal = {Quarterly Journal of the Royal Meteorological Society},
	pages = {12--24},
	title = {The ECMWF ensemble prediction system: Looking back (more than) 25 years and projecting forward 25 years},
	volume = {145},
	year = {2019}}

@article{Hunt2007,
	author = {Hunt, Brian R. and Kostelich, Eric J. and Szunyogh, Istvan},
	doi = {10.1016/j.physd.2006.11.008},
	journal = {Physica D},
	pages = {112--126},
	title = {Efficient data assimilation for spatiotemporal chaos: A local ensemble transform Kalman filter},
	volume = {230},
	year = {2007}}

@article{Bannister2017,
	author = {Bannister, Ross N.},
	doi = {10.1002/qj.2982},
	journal = {Quarterly Journal of the Royal Meteorological Society},
	pages = {607--633},
	title = {A review of operational methods of variational and ensemble-variational data assimilation},
	volume = {143},
	year = {2017}}

@article{Schraff2016,
	author = {Schraff, Christoph and Reich, Hendrik and Rhodin, Andreas and Schomburg, Annika and Stephan, Klaus and Perianez, Africa and Potthast, Roland},
	doi = {10.1002/qj.2748},
	eprint = {https://rmets.onlinelibrary.wiley.com/doi/pdf/10.1002/qj.2748},
	journal = {Quarterly Journal of the Royal Meteorological Society},
	number = {696},
	pages = {1453-1472},
	title = {Kilometre-scale ensemble data assimilation for the COSMO model (KENDA)},
	url = {https://rmets.onlinelibrary.wiley.com/doi/abs/10.1002/qj.2748},
	volume = {142},
	year = {2016}}

@book{NakamuraPotthast2015,
	author = {Nakamura, Gen and Potthast, Roland},
	doi = {10.1088/978-0-7503-1218-9},
	isbn = {978-0-7503-1218-9},
	publisher = {IOP Publishing},
	series = {2053-2563},
	title = {Inverse Modeling},
	url = {https://doi.org/10.1088/978-0-7503-1218-9},
	year = {2015}}

@article{Vobig2021,
	author = {Vobig, Klaus and Stephan, Klaus and Blahak, Ulrich and Khosravian, Kobra and Potthast, Roland},
	doi = {10.1002/qj.4157},
	eprint = {https://rmets.onlinelibrary.wiley.com/doi/pdf/10.1002/qj.4157},
	journal = {Quarterly Journal of the Royal Meteorological Society},
	number = {740},
	pages = {3789-3805},
	title = {Targeted covariance inflation for 3D-volume radar reflectivity assimilation with the LETKF},
	url = {https://rmets.onlinelibrary.wiley.com/doi/abs/10.1002/qj.4157},
	volume = {147},
	year = {2021}}

@article{VobigPotthast2025,
	author = {Vobig, Klaus and Potthast, Roland and Stephan, Klaus},
	doi = {10.5194/npg-32-471-2025},
	journal = {Nonlinear Processes in Geophysics},
	number = {4},
	pages = {471--488},
	title = {On process-oriented conditional targeted covariance inflation for 3D-volume radar data assimilation},
	volume = {32},
	year = {2025}}

@article{Potthast2022,
	author = {Potthast, Roland and Vobig, Klaus and Blahak, Ulrich and Simmer, Clemens},
	doi = {10.1175/MWR-D-21-0017.1},
	journal = {Monthly Weather Review},
	number = {5},
	pages = {969--980},
	title = {Data assimilation of nowcasted observations},
	volume = {150},
	year = {2022}}

@article{Zaengl2015,
	author = {Z{\"a}ngl, G{\"u}nther and Reinert, Daniel and R{\'\i}podas, Pilar and Baldauf, Michael},
	doi = {10.1002/qj.2378},
	eprint = {https://rmets.onlinelibrary.wiley.com/doi/pdf/10.1002/qj.2378},
	journal = {Quarterly Journal of the Royal Meteorological Society},
	number = {687},
	pages = {563-579},
	title = {The ICON (ICOsahedral Non-hydrostatic) modelling framework of DWD and MPI-M: Description of the non-hydrostatic dynamical core},
	url = {https://rmets.onlinelibrary.wiley.com/doi/abs/10.1002/qj.2378},
	volume = {141},
	year = {2015}}

@misc{icon-model-website,
	author = {{DWD and partners}},
	howpublished = {\url{https://www.icon-model.org}},
	note = {Accessed: 2025/12/30},
	title = {ICON Climate and Weather Prediction Model}}

@article{Dance2016,
	article-number = {581},
	author = {Waller, Joanne A. and Ballard, Susan P. and Dance, Sarah L. and Kelly, Graeme and Nichols, Nancy K. and Simonin, David},
	doi = {10.3390/rs8070581},
	issn = {2072-4292},
	journal = {Remote Sensing},
	number = {7},
	title = {Diagnosing Horizontal and Inter-Channel Observation Error Correlations for SEVIRI Observations Using Observation-Minus-Background and Observation-Minus-Analysis Statistics},
	url = {https://www.mdpi.com/2072-4292/8/7/581},
	volume = {8},
	year = {2016}}

@article{Potthast2019,
	author = {Potthast, Roland and Walter, Anne and Rhodin, Andreas},
	doi = {10.1175/MWR-D-18-0028.1},
	journal = {Monthly Weather Review},
	number = {1},
	pages = {345 - 362},
	title = {A localized adaptive particle filter within an operational NWP framework},
	url = {"https://journals.ametsoc.org/view/journals/mwre/147/1/mwr-d-18-0028.1.xml"},
	volume = {147},
	year = {2019}}

@article{Schomburg2026,
        author = {Schomburg, Annika and Schraff, Christoph and Köpken-Watts, Christina and Stephan, Klaus and Schühl, Liselotte and Faulwetter, Robin},
        title = {Operational all-sky assimilation of geostationary infrared water-vapour channels in the regional ICON-D2 model with an ensemble Kalman filter},
        journal = {Quarterly Journal of the Royal Meteorological Society},
        pages = {e70113},
	year = {2026},
        doi = {https://doi.org/10.1002/qj.70113},
        url = {https://rmets.onlinelibrary.wiley.com/doi/abs/10.1002/qj.70113}
}

\end{document}